%%%%%%%%%%%%%%%%%%%%%%% file typeinst.tex %%%%%%%%%%%%%%%%%%%%%%%%%
%
% This is the LaTeX source for the instructions to authors using
% the LaTeX document class 'llncs.cls' for contributions to
% the Lecture Notes in Computer Sciences series.
% http://www.springer.com/lncs       Springer Heidelberg 2006/05/04
%
% It may be used as a template for your own input - copy it
% to a new file with a new name and use it as the basis
% for your article.
%
% NB: the document class 'llncs' has its own and detailed documentation, see
% ftp://ftp.springer.de/data/pubftp/pub/tex/latex/llncs/latex2e/llncsdoc.pdf
%
%%%%%%%%%%%%%%%%%%%%%%%%%%%%%%%%%%%%%%%%%%%%%%%%%%%%%%%%%%%%%%%%%%%

\documentclass[runningheads,a4paper]{llncs}
\usepackage{amsmath}
\usepackage{amssymb}
    \usepackage{amsfonts}
    \usepackage{amsxtra}
    \usepackage{amstext}
    \usepackage{latexsym}
    \usepackage{dsfont} % for \mathds{N}

\setcounter{tocdepth}{3}
\usepackage{graphicx}

\usepackage{url}
%\urldef{\mailsa}\path|{alfred.hofmann, ursula.barth, ingrid.haas, frank.holzwarth,|
%\urldef{\mailsb}\path|anna.kramer, leonie.kunz, christine.reiss, nicole.sator,|
%\urldef{\mailsc}\path|erika.siebert-cole, peter.strasser, lncs}@springer.com|
\newcommand{\keywords}[1]{\par\addvspace\baselineskip
\noindent\keywordname\enspace\ignorespaces#1}

\begin{document}
\mainmatter  % start of an individual contribution

% first the title is needed
\title{Teleportation-based quantum homomorphic encryption scheme with quasi-compactness and perfect security}

% a short form should be given in case it is too long for the running head
\titlerunning{Teleportation-based quantum homomorphic encryption scheme}

% the name(s) of the author(s) follow(s) next
%
% NB: Chinese authors should write their first names(s) in front of
% their surnames. This ensures that the names appear correctly in
% the running heads and the author index.

\author{Min Liang
\thanks{This work was supported by the National Natural Science Foundation of China (Grant No. 61672517).}
}

\authorrunning{Min Liang}
% (feature abused for this document to repeat the title also on left hand pages)

% the affiliations are given next; don't give your e-mail address
% unless you accept that it will be published
\institute{Data Communication Science and Technology Research Institute, Beijing 100191, China\\
              \email{liangmin07@mails.ucas.ac.cn}}

%
% NB: a more complex sample for affiliations and the mapping to the
% corresponding authors can be found in the file "llncs.dem"
% (search for the string "\mainmatter" where a contribution starts).
% "llncs.dem" accompanies the document class "llncs.cls".
%

%\toctitle{Lecture Notes in Computer Science}
%\tocauthor{Authors' Instructions}
\maketitle

\vspace{-5mm}
\begin{abstract}
Quantum homomorphic encryption (QHE) is an important cryptographic technology for delegated quantum computation. It enables remote Server performing quantum computation on encrypted quantum data, and the specific algorithm performed by Server is unnecessarily known by Client. Quantum fully homomorphic encryption (QFHE) is a QHE that satisfies both compactness and $\mathcal{F}$-homomorphism, which is homomorphic for any quantum circuits. However, Yu et al.[Phys. Rev. A 90, 050303(2014)] proved a negative result: assume interaction is not allowed, it is impossible to construct perfectly secure QFHE scheme.
So this article focuses on non-interactive and perfectly secure QHE scheme with loosen requirement, specifically quasi-compactness.

This article defines encrypted gate, which is denoted by $EG[U]:|\alpha\rangle\rightarrow\left((a,b),Enc_{a,b}(U|\alpha\rangle)\right)$. We present a gate-teleportation-based two-party computation scheme for $EG[U]$, where one party gives arbitrary quantum state $|\alpha\rangle$ as input and obtains the encrypted $U$-computing result $Enc_{a,b}(U|\alpha\rangle)$, and the other party obtains the random bits $a,b$. Based on $EG[P^x](x\in\{0,1\})$, we propose a method to remove the $P$-error generated in the homomorphic evaluation of $T/T^\dagger$-gate. Using this method, we design two non-interactive and perfectly secure QHE schemes named \texttt{GT} and \texttt{VGT}. Both of them are $\mathcal{F}$-homomorphic and quasi-compact (the decryption complexity depends on the $T/T^\dagger$-gate complexity).

Assume $\mathcal{F}$-homomorphism, non-interaction and perfect security are necessary property, the quasi-compactness is proved to be bounded by $O(M)$, where $M$ is the total number of $T/T^\dagger$-gates in the evaluated circuit. \texttt{VGT} is proved to be optimal and has $M$-quasi-compactness.
According to our QHE schemes, the decryption would be inefficient if the evaluated circuit contains exponential number of $T/T^\dagger$-gates. Thus our schemes are suitable for homomorphic evaluation of any quantum circuit with low $T/T^\dagger$-gate complexity, such as any polynomial-size quantum circuit or any quantum circuit with polynomial number of $T/T^\dagger$-gates.

\keywords{quantum cryptography, quantum encryption, delegated quantum computation, quantum homomorphic encryption}
\end{abstract}

\vspace{-8mm}
\section{Introduction}
Modern cryptography can ensure the security of communication. However, in the era of cloud computing, the computation is also required to be protected using cryptographic technology. As a key technology to guarantee the security of classical computation, fully homomorphic encryption (FHE) has been studied extensively since Gentry's breakthrough result \cite{Gentry2009}. With the development of quantum computer, quantum computation would be practical in the near future, and provide quantum computing service for clients through the cloud.
In order to ensure the security of cloud quantum computing, the researchers propose mainly two classes of quantum schemes:
(1) blind quantum computation (BQC), and (2) quantum homomorphic encryption (QHE). Both of them are two-party delegated quantum computing technology.

BQC enables a client with weak quantum ability to delegate a quantum task to remote server with strong quantum ability, where the client's input and final result are kept secret and the quantum program is only known by the client.
QHE allows remote server to perform some quantum program $QC$ on encrypted quantum data $Enc(\rho)$ provided by a client, and the client can decrypt server's output and obtain the result $QC(\rho)$.
By comparing the functionality of BQC and QHE, the fundamental difference is that: BQC finishes a quantum program decided only by Client and Server cannot obtain Client's quantum program; however, in QHE, the quantum program is decided by Server and it is unnecessary for Client to know the quantum program (actually it may be negotiated by both parties in advance). So BQC is more powerful than QHE and it is harder to design BQC scheme. When interaction is allowed, the design of BQC protocol has been tackled very well, and lots of results have been proposed in these literatures (\cite{Childs2005,Aharonov2010,Broadbent2009,Sueki2013,Morimae2012a,Morimae2012b,Giovannetti2013}). Whether or not a non-interactive scheme exists for BQC? No result has been given at present.

In modern cryptography, FHE is usually achievable with non-interactive scheme. Similarly, quantum fully homomorphic encryption (QFHE) is also expected to be a non-interactive quantum algorithm. Actually, interaction can decrease the difficulty of the construction of QFHE scheme. For example, the scheme in Ref.\cite{Liang2015a} uses a few interactions, and its total number is the same as that of $T$-gates. Because the interactions necessarily decrease the efficiency of the protocol, this article focuses on interactive QHE scheme.

The early exploration of QHE begins in 2012. Rohde et al.\cite{Rohde2012} propose a simple symmetric-key QHE scheme, which allows quantum random walk on encrypted quantum data. Their scheme is not $\mathcal{F}$-homomorphic, that means it does not allow homomorphic evaluation of arbitrary quantum circuits. Later, Ref.\cite{Liang2013} defines QFHE and constructs a weak scheme, which allows local quantum computation on encrypted data without access any plain data. This scheme does not accord with the usual concept of FHE, and its applicable situation is very limited.
Ref.\cite{Tan2016} proposes a QHE scheme for a large class of unitaries, but it cannot provide the security of cryptographic sense: it can hide only $n/\log n$ bits ($n$ is the size of the message), so the ratio of the hidden amount to the total amount approaches to 0 ($1/\log n\rightarrow 0$) when $n$ approaches to infinity.

Yu et al.\cite{Yu2014} prove a no-go result: if non-interaction is required, any QFHE scheme with perfect security must use exponential size of storage; so it is impossible to construct an efficient QFHE scheme with perfect security. Later, the researchers study QHE schemes with certain loosen condition.
For example, the interactivity is allowed in the schemes \cite{Liang2015a,Fisher2014}, the expected security level is computational security \cite{Broadbent2015,Dulek2016}, the client accommodates the server with enough copies of encrypted ancillary states \cite{Liang2015b,Ouyang2015}.
The scheme in Ref.\cite{Liang2015a} is perfectly secure, however, it needs a few interactions, whose number is the same as
the number of $T$-gates in universal quantum circuit. Though the schemes in Ref.\cite{Broadbent2015} are non-interactive, they are constructed by the combination of quantum one-time pad (QOTP) and classical (fully) homomorphic encryption; so their efficiency and security would be bounded by classical homomorphic encryption scheme.
The scheme proposed by Dulek et al.\cite{Dulek2016} is also non-interactive. In their scheme, it is necessary to use a ancillary gadget, in which the classical FHE scheme is used; so the security of the whole scheme would also be limited to the computational security of classical FHE scheme.
Refs.\cite{Liang2015b,Ouyang2015} associate homomorphic encryption with transversal computation, and propose non-interactive QFHE schemes based on quantum codes. In their schemes, Client must provide Server with sufficient copies of certain encrypted ancillary state; Too many copies of the ancillary states would leak some information about the secret key, so this kind of construction cannot achieve perfect security.

Besides the no-go result of Yu et al.\cite{Yu2014}, Newman and Shi \cite{Newman2018} and Lai and Chung \cite{Lai2018} independently prove an enhanced no-go result: it is impossible to construct non-interactive and information theoretically secure (ITS) QFHE scheme. According to their results, a QFHE scheme can achieve computational security at best if non-interaction is required. Then it is very natural to ask that: whether QHE scheme can provide quantum advantage than its classical counterpart? Lai and Chung \cite{Lai2018} construct a non-interactive ITS QHE scheme, which has compactness. However, it is not $\mathcal{F}$-homomorphic but only partially homomorphic\footnote{A QHE scheme which is compact and not $\mathcal{F}$-homomorphic is called quantum somewhat homomorphic encryption scheme. This kind of QHE scheme is not studied in this article. In fact, it is also worth to study somewhat QHE scheme which is homomorphic for a sufficiently large class of quantum circuits.}. Concretely, it only allows the homomorphic evaluation of a class of quantum circuits called instantaneous quantum polynomial-time (IQP) circuits¡£

We consider non-interactive and perfectly secure QHE scheme, which is $\mathcal{F}$-homomorphic but not compact. Here QHE is not compact if its decryption procedure has dependence on the evaluated circuit. It is still worth to study QHE without compactness if decryption procedure has complexity that scales sublinearly in the size of evaluated circuit. This kind of property is called quasi-compactness \cite{Broadbent2015}. This article will focus on QHE schemes with both $\mathcal{F}$-homomorphism and quasi-compactness.

\subsection{Our Contributions}
Generally, a QFHE scheme should be compact and can evaluate any quantum circuits homomorphically. Broadbent and Jeffery \cite{Broadbent2015} define quasi-compactness for QHE following the quasi-compactness of FHE in Ref.\cite{Gentry2009}. They design a QHE scheme named \texttt{EPR} with quasi-compactness, which is also non-interactive and computationally secure.
Then, if quasi-compactness is expected, whether there exists non-interactive and perfectly secure QHE scheme? This article will give a positive answer.
Concretely, our contributions are listed as follows.
\begin{enumerate}
  \item We define a new fundamental module named encrypted gate $EG[U]:|\alpha\rangle\rightarrow\left((a,b),Enc_{a,b}(U|\alpha\rangle)\right)$, where $a,b\in\{0,1\}$ are random bits, and $Enc_{a,b}$ represents quantum encryption transformation with the keys $a,b$. Based on gate teleportation, $EG[U]$ can be implemented by a two-party computation scheme, where one party obtains the classical part $a,b$ and the other party obtains the encrypted $U$-computing result (e.g.$Enc_{a,b}(U|\alpha\rangle)$).
  \item In order to construct a QHE scheme, the main difficulty is that the homomorphic evaluation of $T/T^\dagger$-gate would cause a $P$-error ($P^\dagger$ in the formula $TX^xZ^z|\phi\rangle=(P^\dagger)^xX^xZ^{x\oplus z}T|\phi\rangle$) which is hard to remove. Ref.\cite{Liang2015a} removes $P$-error using one round of interaction, and Ref.\cite{Dulek2016} removes $P$-error using an ancillary gadget with computational security. This article proposes a new method to remove $P$-error based on encrypted gate $EG[P^x]$: once a $T/T^\dagger$-gate is executed, Server together with Client should finish an encrypted gate $EG[P^x]$, where Client only performs quantum measurement; Based on principle of deferred measurement, all the measurements performed by Client can be deferred until Server finishes the homomorphic evaluation of the circuit, thus the interactions are avoided in the evaluation procedure.
  \item By making use of our method to remove $P$-error, a QHE scheme \texttt{GT} and a variation \texttt{VGT} are constructed. Both the two QHE schemes have the four properties: non-interactive, perfect security, $\mathcal{F}$-homomorphism (or homomorphic evaluation of any quantum circuits), and quasi-compactness. The complexity of decryption depends on the total number of $T/T^\dagger$-gates in the evaluated circuit.
  \item It is proved that, for any non-interactive and perfectly secure QHE scheme with $\mathcal{F}$-homomorphism, the optimal compactness is bounded by $M$-quasi-compactness. The first scheme \texttt{GT} can achieve $M\log M$-quasi-compactness, and the slightly modified version \texttt{VGT} can achieve the optimal bound.
\end{enumerate}
It is worth to stress that: though our schemes are quasi-compact, it is still very efficient in homomorphic evaluation of any quantum circuits with low $T/T^\dagger$-gate complexity. If there are too many $T/T^\dagger$-gates in the evaluated circuit, then the decryption procedure would be inefficient. Here $M\log M$ or $M$-quasi-compactness can asymptotically describe how fast the decryption complexity increases along with $T/T^\dagger$-gates complexity. We hope the value of quasi-compactness is as small as possible. Then the QHE scheme can implement larger circuits if Client's computational capability is fixed.

\subsection{Related works}
The QHE scheme \texttt{EPR} proposed by Broadbent and Jeffery \cite{Broadbent2015} makes use of Bell state and quantum measurement. That scheme is constructed by the combination of QOTP and classical FHE, then it is computational secure. Moreover, it is proved to be $M^2$-quasi-compact, where $M$ is the number of $T$-gates in an evaluated circuit. In this article, our schemes also make use of Bell state and quantum measurement. However, our scheme \texttt{VGT} has perfect security and $M$-quasi-compactness, so it is better than \texttt{EPR}.

Yu et al.\cite{Yu2014} prove a no-go result: if interaction is not allowed, there does not exist QFHE scheme with perfect security. An enhanced no-go result has been proved independently by Newman and Shi \cite{Newman2018} and Lai and Chung \cite{Lai2018}: if interaction is not allowed, there does not exist ITS QFHE scheme.
This article focuses on the non-interactive and perfectly secure QHE schemes. Our schemes are quasi-compact, so they are not QFHE schemes, and then our result does not contradict with those no-go results.  Though our schemes are not QFHE schemes, they can implement any unitary quantum circuit homomorphically (maybe inefficiently).

Alagic et al.\cite{Alagic2017} study computationally secure QFHE scheme, and define an additional property named verification. This article does not consider QHE with verification.

\subsection{Organization}
Section 2 introduces some preliminary knowledge about quantum computation, including gate teleportation. In Section 3, encrypted gate is defined and a method to remove $P$-error is proposed, then the homomorphic scheme for $T/T^\dagger$-gate is constructed. Section 4 contains the detailed QHE schemes and analysis.

\section{Preliminaries}
\subsection{Quantum computation}
For a detailed introduction about quantum computation, we refer the reader to Nielsen and Chuang \cite{Nielsen2000}. Here we give a brief overview of some elementary quantum logic gates and notations.

Pauli $X$ and $Z$ gates are described as unitary operators
$X=\left(
     \begin{array}{cc}
       0 & 1 \\
       1 & 0 \\
     \end{array}
   \right)
$, and
$Z=\left(
     \begin{array}{cc}
       1 & 0 \\
       0 & -1 \\
     \end{array}
   \right)
$.
Hadamard gate is $H=\frac{1}{\sqrt{2}}(X+Z)$, and the phase gate is $P=\sqrt{Z}$. Controlled-NOT gate is
$CNOT=\left(
        \begin{array}{cccc}
          1 & 0 & 0 & 0 \\
          0 & 1 & 0 & 0 \\
          0 & 0 & 0 & 1 \\
          0 & 0 & 1 & 0 \\
        \end{array}
      \right)
$, and implements the following quantum transformation $$CNOT|c\rangle|t\rangle=|c\rangle|t\oplus c\rangle,\forall c,t\in\{0,1\}.$$
$T$-gate is
$T=\left(
     \begin{array}{cc}
       1 & 0 \\
       0 & e^{i\frac{\pi}{4}}\\
     \end{array}
   \right)
$. Among these quantum gates, $T$-gate is the only one non-Clifford gate. In the construction of QHE scheme, the main difficulty is how to devise the homomorphic evaluation of $T$-gate. The conjugate of $T$ is
$T^\dagger=\left(
             \begin{array}{cc}
               1 & 0 \\
               0 & e^{-i\frac{\pi}{4}} \\
             \end{array}
           \right)
$. All these quantum gates form a set $\mathcal{S}=\{X,Z,H,P,CNOT,T,T^\dagger\}$.

Because $T^\dagger=T^7$, then $T^\dagger$ can be implemented by seven $T$-gates. In order to simplify the description of quantum circuit, $T$ and $T^\dagger$ are used together in many practical quantum circuits. In the next section, we construct QHE scheme for $T$-gate and introduce how to modify it so as to implement $T^\dagger$ homomorphically.

For any unitary transformation $U$ and any $n$-bit string $b=b_1b_2\cdots b_n$, denote $U^b=\bigotimes_{i=1}^nU^{b_i}$, especially $U^{11\cdots 1}=\bigotimes_{i=1}^nU=U^{\otimes n}$.

For two $n$-bit strings $a,b\in\{0,1\}^n$, we define $a\oplus b=(a_1\oplus b_1,a_2\oplus b_2,\ldots,a_n\oplus b_n)$.

From Pauli operators $X$ and $Z$, a Pauli group can be generated such as $\{X^aZ^b: a,b\in\{0,1\}^n\}$. Since
\begin{equation}\label{eq1}
\frac{1}{2^{2n}}\sum_{a,b\in\{0,1\}^n}X^aZ^b\sigma(X^aZ^b)^\dagger=\frac{I_{2^n}}{2^n},
\end{equation}
where $\sigma$ is arbitrary quantum state and $\frac{I_{2^n}}{2^n}$ is a $n$-qubit totally mixed state. According to Eq.(\ref{eq1}), Boykin and Roychowdhury \cite{Boykin2003} and Ambainis et al.\cite{Ambainis2000} propose QOTP, which is a symmetric-key encryption scheme for quantum states. It is a kind of quantum cryptographic primitive with perfect security.

Quantum state space is usually described as a complex Hilbert space. In the quantum-message-oriented encryption scheme, the plaintext space and ciphertext space are complex Hilbert space, denoted as $\mathcal{H}_M$ and $\mathcal{H}_C$ respectively. We denote the set of density operators on the space $\mathcal{H}_M$ (or $\mathcal{H}_C$) by $D(\mathcal{H}_M)$ (or $D(\mathcal{H}_C)$). For a quantum state that is pure state, it can be described as a unit vector in complex Hilbert space $\mathcal{H}$; it can also be described as a density operator on space $\mathcal{H}$. In the remainder of this article, a quantum state may be a unit vector or a density operator, and we will not state it definitely since the readers can tell them apart very easily.

Given two quantum states $\rho$ and $\sigma$,  their trace distance is defined as $\Delta(\rho,\sigma)=Tr(|\rho-\sigma|)$, where $|A|$ is defined as $\sqrt{A^\dagger A}$. A quantum channel $\Phi:D(\mathcal{A})\rightarrow D(\mathcal{B})$ is a physically-realizable mapping on quantum registers. Given a quantum circuit $QC$ with $n$-qubit input and $m$-qubit output, a quantum channel $\Phi_{QC}$ can be induced as a quantum transformation from $n$ qubits to $m$ qubits.

\subsection{Gate teleportation}
Gate teleportation can be seen as an extension of quantum teleportation. It is mainly applied in measurement-based quantum computing \cite{Gottesman1999,Jozsa2005}.

Define EPR entanglement state $|\Phi_{00}\rangle=\frac{1}{\sqrt{2}}(|00\rangle+|11\rangle)$, then the four Bell states can be described as
\begin{equation}
|\Phi_{ab}\rangle=(Z^bX^a\otimes I)|\Phi_{00}\rangle,\forall a,b\in\{0,1\}.
\end{equation}

For any single-qubit gate $U$, define ``$U$-rotated Bell basis" as the following set
\begin{equation}
\Phi(U)=\{|\Phi(U)_{ab}\rangle,a,b\in\{0,1\}\},
\end{equation}
where quantum state $|\Phi(U)_{ab}\rangle=(U^\dagger\otimes I)|\Phi_{ab}\rangle$ or $|\Phi(U)_{ab}\rangle=(U^\dagger Z^bX^a\otimes I)|\Phi_{00}\rangle$.

For a single-qubit state $|\alpha\rangle$, the following formula holds \cite{Jozsa2005}:
\begin{equation}\label{eq2}
|\alpha\rangle\otimes|\Phi_{00}\rangle=\sum_{a,b\in\{0,1\}}|\Phi_{ab}\rangle\otimes X^aZ^b|\alpha\rangle.
\end{equation}

In fact, Eq.(\ref{eq2}) is just a mathematical description of quantum teleportation. It can be extended as follows:
\begin{equation}\label{eq3}
|\alpha\rangle\otimes|\Phi_{00}\rangle=\sum_{a,b\in\{0,1\}}|\Phi(U)_{ab}\rangle\otimes X^aZ^bU|\alpha\rangle,
\end{equation}
where $U$ is any single-qubit gate. According to Eq.(\ref{eq3}), if one carries out ``$U$-rotated Bell measurement" (that is, $U$-rotated Bell basis $\Phi(U)$ is selected as the measurement basis in the quantum measurement) on the qubit $|\alpha\rangle$ and the first qubit of $|\Phi_{00}\rangle$, and obtains the result $a,b$, then the second qubit of $|\Phi_{00}\rangle$ would collapse into quantum state $X^aZ^bU|\alpha\rangle$. Finally, performing Pauli operator on $X^aZ^bU|\alpha\rangle$ would result in the state $U|\alpha\rangle$.

Similar to quantum teleportation, we can describe the gate teleportation: Alice and Bob preshare a pair of qubits in the state $|\Phi_{00}\rangle$; Alice prepares a qubit in the state $|\alpha\rangle$, and performs $U$-rotated Bell measurement on Alice's local two qubits, then announces the results; According to the Alice's results, Bob performs Pauli opertors ($X$ or $Z$ or both) and obtains a new state $U|\alpha\rangle$. The gate teleportation can be described as Fig.\ref{fig1}.

\begin{figure}[hbtp]
% Use the relevant command to insert your figure file.
% For example, with the graphicx package use
\centering
  \includegraphics[scale=1.1]{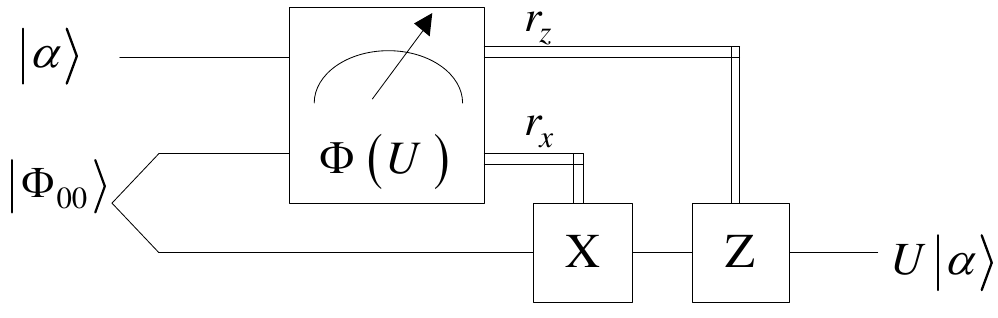}
% figure caption is below the figure
\caption{Quantum circuit for gate teleportation. In the quantum measurement, $U$-rotated Bell basis $\Phi(U)$ is the measurement basis.}
\label{fig1}       % Give a unique label
\end{figure}

Obviously, gate teleportation is an extension of quantum teleportation. If $U$ is the identity, then $\Phi(U)$ would be the standard Bell basis, and gate teleportation would degenerate into quantum teleportation.

\section{Homomorphic evaluation of $T$-gate}
In the set of universal quantum gates, $T$-gate is the only one non-Clifford gate. In order to construct QHE scheme, the main obstacle is how to homomorphically evaluate the $T$-gate. The QHE scheme in Ref.\cite{Liang2015a} realizes $T$-gate homomorphically, however an interaction are used to remove $P$-error that occurs in the homomorphic evaluation of $T$-gate: whenever Server performs a $T$-gate on a qubit, he sends the qubit to Client; Then Client performs a quantum operator $P^x$ ($x$ is a secret bit which is known only by Client)\footnote{In the concrete scheme, in order to protect $x$ from the eavesdropping of Server, Client should perform quantum operator $X^rZ^{r'}P^x$, where both $r$ and $r'$ are secret random bits chosen by Client.} on the qubit and returns it to Server.

In this section, a new module named $EG[P^x]$ is defined, and a new method based on $EG[P^x]$ is proposed to remove $P$-error. Based on gate teleportation, the module $EG[P^x]$ is implemented through a two-party secure computation scheme, which contains quantum measurement. When the module $EG[P^x]$ is used in the homomorphic evaluation of $T$-gate, Client can defer all the quantum measurements until Server has finished all his/her computation. Thus the interactions can be avoided.

\subsection{Encrypted gate based on gate teleportation}
Define encrypted gate $EG[U]:|\alpha\rangle\rightarrow\left((a,b),Enc_{a,b}(U|\alpha\rangle)\right)$, where $a,b\in\{0,1\}$ are random bits,
$Enc_{a,b}$ is a quantum encryption operator with the keys $a,b$. Especially, if $Enc_{a,b}$ is a QOTP encryption, then $EG[U]:|\alpha\rangle\rightarrow\left((a,b),X^aZ^bU|\alpha\rangle\right)$.
In order to apply it to the construction of QHE, we present a two-party computation scheme for $EG[U]$ based on gate teleportation. In our scheme, Server provide a quantum state $|\alpha\rangle$ as input; While the computation of $EG[U]$ is finished, Client will obtain random bits $(a,b)$ and Server obtains the quantum state $X^aZ^bU|\alpha\rangle$; Moreover, $U$ is selected by Client, and Server cannot obtain any information about $a,b$ and $U$. For any two-party computation scheme implementing $EG[U]$, if Server can intercept zero information about $a,b$ and $U$, then we say the scheme has perfect security.

The most important feature of encrypted gate $EG[U]$ is that the computing of $U$-gate is encrypted and the gate $U$ must be secure (e.g. in the two-party computation scheme, Server cannot know which gate is selected by Client).

Fig.\ref{fig2} describes a gate-teleportation-based implementation of $EG[U]:|\alpha\rangle\rightarrow\left((r_x,r_z),X^{r_x}Z^{r_z}U|\alpha\rangle\right)$,
where $r_x,r_z\in\{0,1\}$ are random bits. This implementation must use the Bell state and quantum measurement.

\begin{figure}[hbt]
% Use the relevant command to insert your figure file.
% For example, with the graphicx package use
\centering
  \includegraphics[scale=1.6]{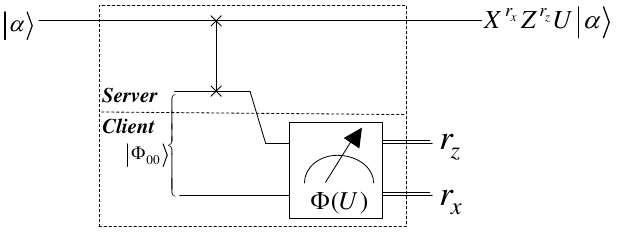}
% figure caption is below the figure
\caption{Encrypted gate $EG(U)$ is implemented by a two-party computation scheme based on gate teleportation. The dashed frame at the top shows Server's operations, and the dashed frame at the bottom shows Client's operations. There is only one qubit transmission from Server to Client.}
\label{fig2}       % Give a unique label
\end{figure}

The detailed procedure is as follows.
\begin{enumerate}
  \item Server and Client preshare quantum entanglement $|\Phi_{00}\rangle$;
  \item Server sends quantum state $|\alpha\rangle$ to Client;
  \item Client performs $U$-rotated Bell measurement, and obtains random results $r_x,r_z\in\{0,1\}$; Then the qubit retained by Server would be in the state $X^{r_x}Z^{r_z}U|\alpha\rangle$.
\end{enumerate}

Especially when $U=P^x$,
\begin{equation}
|\alpha\rangle\otimes|\Phi_{00}\rangle=\sum_{a,b\in\{0,1\}}|\Phi(P^x)_{ab}\rangle\otimes X^aZ^bP^x|\alpha\rangle,\forall x.
\end{equation}
Considering $P^x$-rotated Bell measurement (or quantum measurement in the basis $\{|\Phi(P^x)_{ab}\rangle\}_{a,b\in\{0,1\}}$) on the first two qubits, if one measures two bits $a$ and $b$, then the third qubit would be in the state $X^aZ^bP^x|\alpha\rangle$.

According to gate teleportation, it can be easily shown the correctness of the implementation of $EG[U]$ in Fig.\ref{fig2}. We focus on its security in the following.

\begin{proposition}\label{prp1}
The two-party computation scheme of $EG[U]$ in Fig.\ref{fig2} is perfectly secure. That is, Server cannot obtain any information about $U$ and $r_x,r_z$.
\end{proposition}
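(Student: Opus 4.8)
The plan is to pin down precisely what Server can access during the protocol of Fig.~\ref{fig2}, and then show that this accessible state is statistically independent of $U$, $r_x$, and $r_z$. The key structural observation is that the only transmission in the scheme runs from Server to Client (the single qubit $|\alpha\rangle$); Client performs its $U$-rotated Bell measurement \emph{locally} and, unlike ordinary gate teleportation, deliberately does \emph{not} announce the outcomes $r_x,r_z$. Hence, after the protocol, the entire quantum system in Server's hands is just Server's half of the preshared entanglement $|\Phi_{00}\rangle$. It therefore suffices to compute the reduced density operator of that one qubit and verify it carries no dependence on Client's secrets.

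First I would compute Server's marginal before Client's measurement. Since $|\Phi_{00}\rangle$ is maximally entangled and $|\alpha\rangle$ has been handed to Client, tracing out the two qubits held by Client gives
$$\mathrm{Tr}_{\text{Client}}\bigl(|\alpha\rangle\langle\alpha|\otimes|\Phi_{00}\rangle\langle\Phi_{00}|\bigr)=\frac{I}{2},$$
the single-qubit maximally mixed state, which already has no dependence on $U$, on the measurement basis, or even on $|\alpha\rangle$. Next I would account for Client's measurement. Invoking the no-signaling principle, a local operation on Client's subsystem cannot change the reduced state of Server's qubit, so Server's marginal stays $I/2$. To replace this appeal with an explicit calculation, I would instead use the gate-teleportation identity Eq.(\ref{eq3}): conditioned on outcome $(a,b)$, each occurring with probability $1/4$, Server's qubit collapses to $X^aZ^bU|\alpha\rangle$. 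Because Server never learns $(a,b)$, its state is the uniform average
$$\rho_S=\frac{1}{4}\sum_{a,b\in\{0,1\}}X^aZ^bU|\alpha\rangle\langle\alpha|U^\dagger Z^bX^a,$$
which by the one-qubit twirl identity Eq.(\ref{eq1}) (take $n=1$ and $\sigma=U|\alpha\rangle\langle\alpha|U^\dagger$) equals $I/2$. This both reconfirms the marginal by direct computation and exhibits its independence from $r_x=a$ and $r_z=b$.

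I would then conclude. Server's view is $\rho_S=I/2$ for every admissible $U$, $r_x$, $r_z$ (and every input $|\alpha\rangle$), so for any two distinct choices the trace distance $\Delta$ between the corresponding Server-states is $0$. No measurement or processing can distinguish states that are literally equal, so Server extracts zero information about $U$, $r_x$, $r_z$, which is exactly perfect security in the sense defined before the proposition.

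The step I expect to demand the most care is the bookkeeping in the first paragraph: establishing rigorously that Server's accessible system is \emph{only} its half of $|\Phi_{00}\rangle$, i.e. that withholding $r_x,r_z$ is precisely what separates $EG[U]$ from plain gate teleportation, where broadcasting the outcomes would disclose the Pauli correction and hence $r_x,r_z$. As a secondary subtlety, if $|\alpha\rangle$ is in practice one qubit of a larger register that Server retains, I would add the remark that Server's kept qubits are uncorrelated with Client's secret-dependent local measurement, so the same marginal argument applies verbatim to Server's full state and the independence from $U,r_x,r_z$ is preserved.
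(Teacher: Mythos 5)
Your proposal is correct and follows essentially the same route as the paper: both arguments reduce perfect security to the fact that the uniformly random, undisclosed Pauli correction $X^{r_x}Z^{r_z}$ acts as a quantum one-time pad, so Server's accessible state is maximally mixed regardless of $U$, $r_x$, $r_z$ (and of $|\alpha\rangle$). You merely make explicit, via the twirl identity Eq.(\ref{eq1}), the density-operator computation that the paper invokes informally by citing QOTP security.
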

\begin{proof}
In fact, the dashed frame shows a slightly modified version of gate teleportation, so Client would obtain uniformly random bits $r_x,r_z$ and Server cannot know any information about them. If the dashed frame is viewed as a black box (both Client and Server contribute to the functionality of the black box), then one can see that Client performs quantum operation $X^{r_x}Z^{r_z}U$ on Server's quantum state $|\alpha\rangle$ and the two bits $r_x,r_z\in\{0,1\}$ are unknown by Server. Thus Server does not know which gate is selected by Client, and its perfect security is derived from QOTP. Especially when $U=P^x$, the scheme in Fig.\ref{fig2} realizes the encrypted gate $EG[P^x]$. At the beginning of its execution, Server does not know which basis (or the value of $x$) is chosen by Client. At the end, Server cannot deduce Client's measurement basis (or the value of $x$) from all the inputs and outputs; The reason is as follow: Client's measurement results $r_x,r_z$ are completely random and are unknown by Server, so $X^{r_x}Z^{r_z}$ can be viewed as a QOTP encryption, then the value of $x$ is perfectly hidden in the output $X^{r_x}Z^{r_z}P^x|\alpha\rangle$.~~\hfill{}$\Box$
\end{proof}

Next, we propose the detailed construction of QHE scheme for $T$-gate. In the scheme, a method to remove $P$-error is introduced based on the encrypted gate $EG[P^x]$, then no interaction is necessary in the homomorphic evaluation of $T$-gate.

\subsection{QHE scheme for $T$-gate}
QHE scheme for $T$-gate contains the following five parts: \textbf{Setup}, \textbf{Key Generation}, \textbf{Encryption}, \textbf{Evaluation of $T$-gate}, and \textbf{Decryption}, where \textbf{Setup} preshares Bell state between Client and Server, and \textbf{Key Generation} generates encryption key (or secret key). The decryption key is generated until the key-updating procedure in \textbf{Decryption}. In the \textbf{Decryption}, Client should perform quantum measurement using the secret key; Then Client updates the secret key according to the key-updating rule and measurement result; Finally, Client obtains the decryption key (or final key).

In the following scheme, the first qubit of $|\Phi_{00}\rangle$ is relabeled as $c$, and the other as $s$. Then this Bell state is denoted by $|\Phi_{00}\rangle_{c,s}$.

\noindent\fbox{
\begin{minipage}{33.3em}
\begin{center}
    \textbf{QHE scheme for $T$-gate}
\end{center}
\begin{description}
  \item[1.Setup:] EPR source generates Bell state $|\Phi_{00}\rangle_{c,s}$, where qubit $c$ and qubit $s$ are held by Client and Server, respectively.
  \item[2.Key Generation:] Generate random bits $x,z\in\{0,1\}$, and output $sk=(x,z)$ as secret key.
  \item[3.Encryption:] Given a single-qubit input, Client performs QOTP encryption with secret key $sk=(x,z)$. For example $|\alpha\rangle\rightarrow X^xZ^z|\alpha\rangle$.
  \item[4.Evaluation of $T$-gate:] Server performs $T$-gate on encrypted qubit, and then a quantum $\mathrm{SWAP}$ operation on that qubit and qubit $s$, and finally outputs both of them.
  \item[5.Decryption:] According to secret key $sk=(x,z)$, Client carries out quantum measurement and key-updating, and obtains a new key which can be used as the decryption key.

  \begin{minipage}{31.5em}
  \begin{description}
    \item[5-1.Quantum measurement.] According to the key $x$, Client performs $P^x$-rotated Bell measurement on qubit $s$ and qubit $c$, and obtains random values $(r_x,r_z)$.
    \item[5-2.Key-updating.] From the key-updating rules (\textbf{Appendix C}) and the secret key $sk=(x,z)$ and measurement result $(r_x,r_z)$, Client computes the new key $(x',z')=(x\oplus r_x,x\oplus z\oplus r_z)$.
    \item[5-3.QOTP decryption.] With the new key $(x',z')$, Client performs QOTP decryption on the encrypted result and obtains the plaintext result. For example $|\varphi\rangle\rightarrow Z^{z'}X^{x'}|\varphi\rangle=T|\alpha\rangle$.
  \end{description}
  \end{minipage}
\end{description}
\end{minipage}
}

QHE scheme for $T$-gate can be described as Fig.\ref{fig3}.

\begin{figure}[hbt]
% Use the relevant command to insert your figure file.
% For example, with the graphicx package use
\centering
  \includegraphics[scale=1.2]{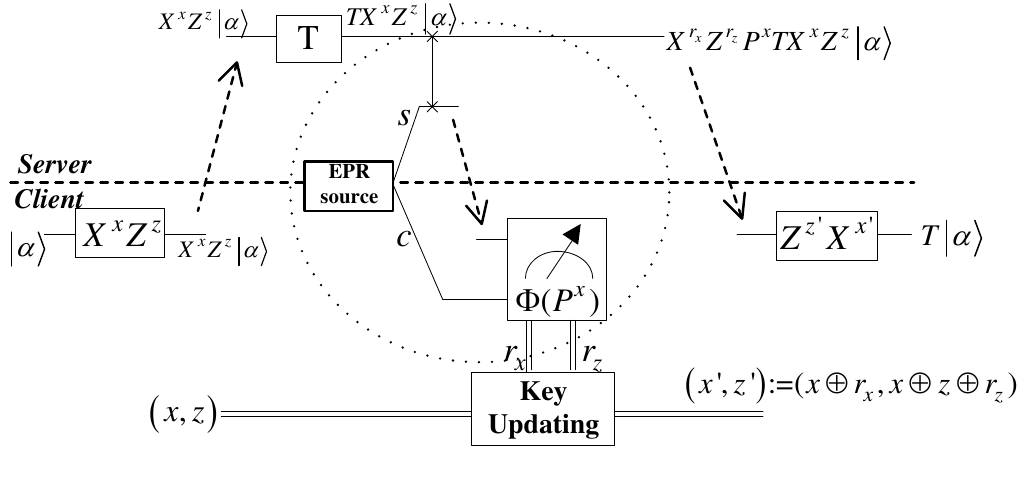}
% figure caption is below the figure
\caption{QHE scheme for $T$-gate. Server's operations are above the dashed line and Client's operations are below the dashed line. Each arrow shows the direction in which the information is transmitting, e.g. from Server to Client or from Client to Server.
The dashed circle describes the two-party computation scheme of $EG[P^x]$. Key-updating is a classical algorithm.}
\label{fig3}       % Give a unique label
\end{figure}

In Fig.\ref{fig3}, the dashed circle shows the two-party computation scheme of encrypted gate $EG[P^x]$, which would output $((r_x,r_z), X^{r_x}Z^{r_z}P^xTX^xZ^z|\alpha\rangle)$. The classical bits $(r_x,r_z)$ are the partial inputs for the key-updating procedure, and the quantum state is the encrypted result. There are three transmissions of messages in the whole scheme. The first and third messages are encrypted data and encrypted result, respectively. So there is only one transmission in the homomorphic evaluation. Concretely, there is an unidirectional transmission from Server to Client in the execution of $EG[P^x]$.

The two bits $x,z\in\{0,1\}$ are Client's secret key, and are unknown by Server. Client's quantum measurement depends on the secret bit $x$ and the measurement basis is $\{|\Phi(P^x)_{ab}\rangle\}_{a,b\in\{0,1\}}$. In addition, Client's outputs $r_x,r_z$ in the measurement are completely random, and would not be sent to Server, so Server cannot obtain the secret key $x,z$ and the new key $(x',z')$, where $x'=x\oplus r_x, z'=x\oplus z\oplus r_z$.

\subsection{Analysis}
In the above scheme, the implementation of $EG[P^x]$ (in the dashed circle) contains quantum measurement. According to the principal of deferred measurement, Client could defer the $P^x$-rotated Bell measurement until Server returns the result of homomorphic evaluation. While measuring the qubit $s$ and qubit $c$, Client obtains the encrypted result $X^{x'}Z^{z'}T|\alpha\rangle$ and two random bits $(r_x,r_z)$. Then Client can update the secret key $(x,z)$ according to $(r_x,r_z)$ and obtain a new secret key $(x',z')$, which is used to decrypt the encrypted result.

In Fig.\ref{fig3}, there are three arrows, which represent the 1st, 2nd, and 3rd information transmission, respectively. Both the 2nd and 3rd transmissions are from Server to Client. Because the quantum measurement can be deferred until Server returns the final result of homomorphic evaluation, the 2nd message can be incorporated into the 3rd transmission. Then the whole scheme contains only two information transmissions, and no interaction is necessary during the phase of homomorphic evaluation.

\begin{proposition}
The QHE scheme for $T$-gate is correct. That is, $T$-gate can be implement homomorphically.
\end{proposition}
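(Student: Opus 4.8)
The plan is to establish correctness by tracking the single qubit through each phase of the protocol and checking that the state recovered after decryption equals $T|\alpha\rangle$ up to an irrelevant global phase. Two ingredients drive the argument. The first is the correctness of the encrypted-gate module $EG[P^x]$: by the gate-teleportation relation \eqref{eq3} it maps any input $|\beta\rangle$ to $\left((r_x,r_z),\,X^{r_x}Z^{r_z}P^x|\beta\rangle\right)$, where $(r_x,r_z)$ are the $P^x$-rotated Bell-measurement outcomes. The second is the non-Clifford commutation identity
\[
T X^x Z^z |\alpha\rangle = (P^\dagger)^x X^x Z^{x\oplus z} T|\alpha\rangle,
\]
which I would either quote or verify directly for $x,z\in\{0,1\}$ by a $2\times 2$ matrix computation; it holds up to a global phase, which does not affect the output state.

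First I would record the state after each step. After \textbf{Encryption} the Server holds $X^x Z^z|\alpha\rangle$. In the \textbf{Evaluation} step the Server applies $T$, producing $T X^x Z^z|\alpha\rangle$; the subsequent $\mathrm{SWAP}$ with qubit $s$ is not a computation on the data but a resource-management move, placing the data into the half of the pre-shared pair $|\Phi_{00}\rangle_{c,s}$ destined for teleportation, so that the pre-shared EPR pair doubles as the teleportation channel of $EG[P^x]$ and no fresh entanglement is needed. I would make this explicit so that it is clear which qubit plays the role of the input $|\beta\rangle$ in Fig.~\ref{fig2} and which qubits form the teleportation pair.

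Next I would feed $|\beta\rangle = T X^x Z^z|\alpha\rangle$ into $EG[P^x]$. By its correctness the encrypted result returned to Client is
\begin{align*}
X^{r_x}Z^{r_z}P^x\,T X^x Z^z|\alpha\rangle
&= X^{r_x}Z^{r_z}P^x(P^\dagger)^x X^x Z^{x\oplus z}T|\alpha\rangle\\
&= X^{r_x}Z^{r_z}X^x Z^{x\oplus z}T|\alpha\rangle,
\end{align*}
using the commutation identity and $P^x(P^\dagger)^x=I$; this is the crux of the construction, since the $P$-error is cancelled exactly by the $P^x$ supplied by $EG[P^x]$. It then remains to renormalise the Pauli mask: pushing $Z^{r_z}$ through $X^x$ via $Z^{r_z}X^x=(-1)^{r_z x}X^x Z^{r_z}$ collects the state into $X^{x'}Z^{z'}T|\alpha\rangle$ (up to global phase) with $x'=x\oplus r_x$ and $z'=x\oplus z\oplus r_z$, matching the key-updating rule of step 5-2. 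Finally the QOTP decryption $Z^{z'}X^{x'}$ inverts this mask and yields $T|\alpha\rangle$.

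I expect the only genuinely delicate point to be the bookkeeping around the $\mathrm{SWAP}$ and the deferred measurement. I must confirm that (i) after the SWAP the qubit carrying $T X^x Z^z|\alpha\rangle$ is exactly the one measured against $c$ in the $P^x$-rotated Bell basis, so that $EG[P^x]$ acts on the intended state, and (ii) by the principle of deferred measurement, moving Client's Bell measurement to the end of the homomorphic evaluation leaves the joint output distribution unchanged, so that correctness of the interactive description transfers verbatim to the non-interactive one. Once these structural points are pinned down, the Pauli algebra and the cancellation of the $P$-error are entirely routine.
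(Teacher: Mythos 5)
Your proposal is correct and follows essentially the same route as the paper's own proof: it identifies the SWAP-plus-deferred-measurement pair as an instance of $EG[P^x]$, cancels the $P$-error via the commutation identity $P^xTX^xZ^z=X^xZ^{x\oplus z}T$ (which you state in the equivalent form $TX^xZ^z=(P^\dagger)^xX^xZ^{x\oplus z}T$), and collects the residual Pauli mask into the updated key $(x',z')=(x\oplus r_x,\,x\oplus z\oplus r_z)$ before QOTP decryption. The extra bookkeeping you flag about the SWAP and the deferred measurement is handled in the paper's surrounding analysis rather than in the proof itself, but it is the same argument.
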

\begin{proof}
The phase of evaluation contains a $\mathrm{SWAP}$ operation, and the phase of decryption contains a quantum measurement. The two operations have just finished the encrypted gate $EG[P^x]$. It is easy to verify the following equation (up to a global phase):
\begin{equation}
P^xTX^xZ^z=X^xZ^{x\oplus z}T,\forall x,z\in\{0,1\}.
\end{equation}
So (up to a global phase)
\begin{eqnarray}
EG[P^x](TX^xZ^z|\alpha\rangle)&=&\left((r_x,r_z),X^{r_x}Z^{r_z}P^xTX^xZ^z|\alpha\rangle\right) \\
&=&\left((r_x,r_z),X^{r_x}Z^{r_z}X^xZ^{x\oplus z}T|\alpha\rangle\right) \\
&=&\left((r_x,r_z),X^{x\oplus r_x}Z^{x\oplus z\oplus r_z}T|\alpha\rangle\right),
\end{eqnarray}
where $r_x,r_z\in\{0,1\}$ and $|\alpha\rangle$ is arbitrary qubit. While finishing the $EG[P^x]$, Client would obtain $(r_x,r_z)$ and Server would obtain a qubit in the encrypted state $X^{x'}Z^{z'}T|\alpha\rangle$, where the new key $(x',z')=(x\oplus r_x, x\oplus z\oplus r_z)$; However, Server does not know the value of the new key $(x',z')$. Then Server sends the encrypted state to Client. Finally, Client firstly computes the new key $(x',z')$ from the his/her secret key $sk=(x,z)$ and measurement result $r_x,r_z$ and then decrypts the encrypted state and obtains plaintext result $T|\alpha\rangle$.~~\hfill{}$\Box$
\end{proof}

\begin{proposition}
The QHE scheme for $T$-gate is perfectly secure.
\end{proposition}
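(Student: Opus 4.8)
The plan is to show that the entire scheme leaks no information about the secret key $sk=(x,z)$ to Server, from which perfect security follows. The structure I would use mirrors the proof of Proposition~\ref{prp1}, since the only step in which Server and Client jointly manipulate quantum data is the embedded encrypted gate $EG[P^x]$; everything else is either Client-local or a transmission of already-encrypted data. First I would identify exactly what Server sees during the whole protocol: after the \textbf{Encryption} phase Server receives the QOTP-encrypted state $X^xZ^z|\alpha\rangle$, and during \textbf{Evaluation} Server applies $T$ and the $\mathrm{SWAP}$ using only the preshared half of $|\Phi_{00}\rangle_{c,s}$. Crucially, all of Client's actions --- the $P^x$-rotated Bell measurement and the key-updating --- happen in \textbf{Decryption}, \emph{after} Server has handed back the qubits, so Server's view is completely determined before any key-dependent measurement is performed.

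The key step is therefore to argue that Server's reduced quantum state is independent of $(x,z)$. I would do this by tracing out Client's systems. The state Server holds after encryption is $X^xZ^z|\alpha\rangle$; by Eq.~(\ref{eq1}) (the QOTP mixing identity), averaging over the uniformly random key $(x,z)$ yields the totally mixed state $I/2$, so before Server does anything its marginal carries zero information about $(x,z)$. Applying the fixed, key-independent operations $T$ and $\mathrm{SWAP}$ preserves this independence, since these are the same unitaries regardless of the key. I would then note that Server's half of the entangled pair $|\Phi_{00}\rangle_{c,s}$ is, on its own, maximally mixed, so the presence of the ancilla qubit $s$ leaks nothing either. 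Combining these, Server's total reduced density operator is independent of $(x,z)$, hence of the derived new key $(x',z')=(x\oplus r_x, x\oplus z\oplus r_z)$ as well.

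I would close by invoking Proposition~\ref{prp1} directly for the $EG[P^x]$ component: because the measurement results $(r_x,r_z)$ are uniformly random and are never transmitted to Server, the operator $X^{r_x}Z^{r_z}$ acts as a fresh QOTP pad on the output, so the choice of basis (equivalently the value of $x$) is perfectly hidden in $X^{r_x}Z^{r_z}P^x T X^xZ^z|\alpha\rangle$. Since the deferred-measurement argument (Section~3.3) shows the protocol is equivalent to one in which Client measures only at the end, no additional rounds expose the key. Therefore Server cannot distinguish any two key choices with any advantage, which is exactly perfect security.

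The main obstacle I anticipate is being careful about \emph{correlations} rather than just marginals: one must verify that Server's system, taken together with the classical transcript Server actually receives (here essentially empty, since $r_x,r_z$ stay with Client), is jointly independent of the key --- a marginal-only argument would be insufficient if Server received any measurement outcomes. The clean resolution is precisely that in this scheme the only classical data generated, $(r_x,r_z)$, never leaves Client, so the transcript-plus-quantum-state view reduces to the quantum marginal, and the QOTP perfect-secrecy guarantee of Eq.~(\ref{eq1}) applies without modification.
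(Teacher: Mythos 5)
Your proof is correct and follows essentially the same route as the paper's: it reduces the security of the whole scheme to the perfect secrecy of the initial QOTP encryption (Eq.~(\ref{eq1})) plus the perfect security of the embedded $EG[P^x]$ from Proposition~\ref{prp1}, noting that $(r_x,r_z)$ never leave Client. Your version is somewhat more careful than the paper's (explicitly tracing out Client's systems and addressing joint independence of the transcript and quantum state rather than just marginals), but the decomposition and the key ingredients are the same.
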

\begin{proof}
The security is guaranteed by the security of QOTP and $EG[P^x]$. Concretely, Client performs QOTP encryption on quantum data $|\alpha\rangle$ and sends the encrypted data to Server, so Server cannot obtain any information about the plaintext data $|\alpha\rangle$ and random bits $x,z$. Next, Server collaborates with Client and finishes the computation of $EG[P^x]$. Because Proposition \ref{prp1} has proved the perfect security of the two-party computation scheme for $EG[P^x]$, Server cannot obtain any information about $x$ and $r_x,r_z$ during the computation of $EG[P^x]$. Above all, Server cannot obtain any information about $|\alpha\rangle$, $(x,z)$ and $(r_x,r_z)$, so the final key $(x',z')$ is also perfectly protected. Then, with respect to Server, the computation result $T|\alpha\rangle$ is perfectly hidden in the encrypted result $X^{x'}Z^{z'}T|\alpha\rangle$.~~\hfill{}$\Box$
\end{proof}

It can be seen from the QHE scheme for $T$-gate that, Server only performs $T$-gate and $\mathrm{SWAP}$, and Client performs more complex quantum computation (e.g. $X$,$Z$,$CNOT$,$P$,$H$,$Z$-basis measurement); However, this QHE scheme is only customized to the homomorphic evaluation of $T$-gate. For a general quantum circuit which contains lots of different gates, we will construct its QHE scheme and compare Client's quantum complexity and Server's.

\begin{remark}
For a $T^\dagger$-gate, its QHE scheme can be constructed in the same way, and the proofs of correctness and security are also the same. The only difference lies in that, key-updating algorithm will obtain the new key $(x',z')=(x\oplus r_x, z\oplus r_z)$.
\end{remark}

In the next section, QHE schemes for any unitary quantum circuit are presented based on the scheme that implements $T/T^\dagger$-gate homomorphically. Once that Server performs a $T/T^\dagger$-gate, he/she should collaborate with Client and finishes an execution of $EG[P^x]$ (the bit $x\in\{0,1\}$ is an intermediate value during the process of key-updating. It is only known by Client, so the collaboration of Server and Client is necessary for the execution of $EG[P^x]$). Thus, once a $T/T^\dagger$-gate is performed, an unidirectional transmission from Server to Client is required. In the whole QHE scheme for quantum circuit, the total number of transmissions is the same as that of $T/T^\dagger$-gate in the circuit. In fact, all these transmissions can be deferred until finishing all the homomorphic evaluation of the circuit, then the interactions in the homomorphic evaluation procedure can be completely avoided.

\section{Quantum homomorphic encryption scheme for any unitary quantum circuit}
In the QHE scheme for $T/T^\dagger$-gate, based on $EG[P^x]$, a non-interactive method is proposed to remove $P$-error generated in the homomorphic evaluation. Moreover, using the method, we devise two QHE schemes for any unitary quantum circuit. Our schemes have the following properties: non-interaction, perfect security, homomorphic evaluation of any unitary quantum circuit, and quasi-compactness. The first scheme \texttt{GT} achieves $M\log M$-quasi-compactness, and its variation \texttt{VGT} achieves $M$-quasi-compactness.

\subsection{Main idea}\label{sec41}
According to the QHE scheme for $T$-gate, once that Server performs a $T$-gate, Server and Client should finish a two-party computation scheme of $EG[P^x]$. The implementation of $EG[P^x]$ contains quantum measurement, and the measurement depends on the secret bit $x$. Given a quantum circuit $\mathcal{C}$ which has $M$ $T/T^\dagger$-gates, if each $T/T^\dagger$-gate is implemented homomorphically following the previous scheme, then the QHE scheme for $\mathcal{C}$ would contain $M$ quantum measurements. We prove all the $M$ measurements can be deferred until Server has finished the evaluation of quantum circuit $\mathcal{C}$.

Without loss of generality, we consider a quantum circuit with at least one $T$-gate (see Fig.\ref{fig4}). The circuit contains a $T$-gate acting on the 1st qubit, and then any sub-circuit $\Omega$ acting on $n(n\geq 1)$ qubits, where $\Omega$ may contains more $T$-gates. $\Omega$ may be only one quantum gate, or a series of gates; $\Omega$ maybe act on only one qubit (if $n=1$, then $\Omega$ acts on only one qubit, and acts on the same qubit as $T$-gate), or act on $n$ qubits.

\begin{figure}[hbt]
% Use the relevant command to insert your figure file.
% For example, with the graphicx package use
\centering
  \includegraphics[scale=1.1]{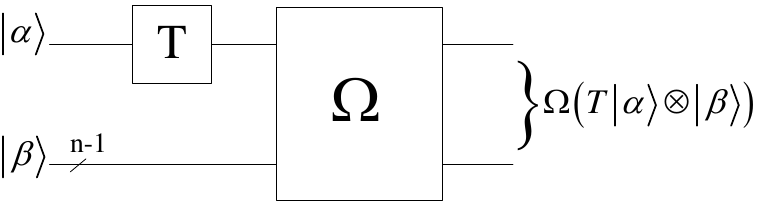}
% figure caption is below the figure
\caption{The general structure of quantum circuit containing $T$-gates. The sub-circuit $\Omega$ contains lots of quantum gates.}
\label{fig4}       % Give a unique label
\end{figure}

Given a quantum circuit containing $T$-gates (see Fig.\ref{fig4}), we consider that the $T$-gate is implemented homomorphically using the previous scheme. Concretely, the encrypted data $X^xZ^z|\alpha\rangle$ is transformed by $T$, and then $EG[P^x]$ performs on it and Bell state. We prove that, the measurement ($P^x$-rotated Bell measurement) in the module $EG[P^x]$ can be deferred until finishing the computation of the quantum circuit $\Omega$ (see Fig.\ref{fig5}).
\begin{figure}[hbt]
% Use the relevant command to insert your figure file.
% For example, with the graphicx package use
\centering
  \includegraphics[scale=1.1]{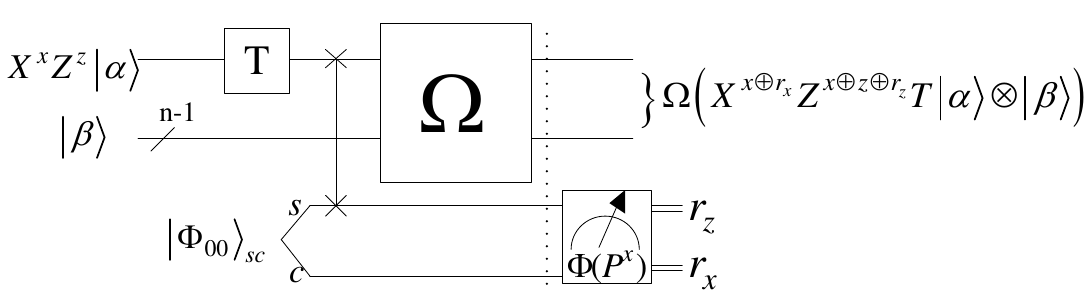}
% figure caption is below the figure
\caption{QHE scheme for $T$-gate is used in the homomorphic evaluation of a general circuit. The $P^x$-rotated Bell measurement is deferred until the quantum circuit $\Omega$ has been performed. The measurement basis $\Phi(P^x)$ depends on the secret bit $x$, which is used to produce the encrypted data $X^xZ^z|\alpha\rangle$.}
\label{fig5}       % Give a unique label
\end{figure}

In Fig.\ref{fig5}, there are four quantum registers. In order to explicitly describe the following deduction, these registers are denoted as $1,2,s,c$, respectively. The 2nd register has $n-1$ qubits. In the following formulas, the subscript of a quantum operator indicates which register is performed by the quantum operator, and the subscript of a quantum state indicates which register is in that state.
\begin{eqnarray}
&& (X^xZ^z|\alpha\rangle)_1\otimes |\beta\rangle_2 \otimes |\Phi_{00}\rangle_{s,c} \nonumber\\
&& \xrightarrow{T_1}(TX^xZ^z|\alpha\rangle)_1\otimes |\beta\rangle_2 \otimes |\Phi_{00}\rangle_{s,c} \\
&& \xrightarrow{SWAP_{1,s}}(TX^xZ^z|\alpha\rangle)_s\otimes |\beta\rangle_2 \otimes |\Phi_{00}\rangle_{1,c} \\
&& \xrightarrow{\Omega_{1,2}}(\Omega_{1,2}\otimes I_s \otimes I_c)((TX^xZ^z|\alpha\rangle)_s \otimes |\Phi_{00}\rangle_{1,c}\otimes |\beta\rangle_2) \\
&& =(\Omega_{1,2}\otimes I_s \otimes I_c)\sum_{a,b\in\{0,1\}}|\Phi(P^x)_{ab}\rangle_{s,c}\otimes X^aZ^bP^xTX^xZ^z|\alpha\rangle_1\otimes |\beta\rangle_2 \\
&& =\sum_{a,b\in\{0,1\}}|\Phi(P^x)_{ab}\rangle_{s,c}\otimes \Omega_{1,2}(X^{x\oplus a}Z^{x\oplus z\oplus b}T|\alpha\rangle_1\otimes |\beta\rangle_2) \\
&& \xrightarrow{U-\texttt{rotated Bell measurement}}\\
&& \Omega\left((X^{x\oplus r_x}Z^{x\oplus z\oplus r_z}T|\alpha\rangle)\otimes |\beta\rangle\right)_{1,2}, \texttt{if the measurement outputs } r_z,r_x. \nonumber
\end{eqnarray}
Thus the final result is $\left((r_x,r_z),\Omega(X^{x\oplus r_x}Z^{x\oplus z\oplus r_z}T|\alpha\rangle\otimes |\beta\rangle)\right)$. Next we consider the condition that the quantum measurement ($P^x$-rotated Bell measurement) in $EG[P^x]$ is not deferred. Concretely, $T$-gate is firstly performed and then followed by $EG[P^x]$, e.g.$EG[P^x]TX^xZ^z|\alpha\rangle=\left((r_x,r_z),X^{x\oplus r_x}Z^{x\oplus z\oplus r_z}T|\alpha\rangle\right)$, and the sub-circuit $\Omega$ is performed finally and the first two registers come into the state $\left((r_x,r_z),\Omega(X^{x\oplus r_x}Z^{x\oplus z\oplus r_z} T|\alpha\rangle\otimes|\beta\rangle)\right)$. Thus, the final result would not be affected whether or not deferring the quantum measurement in $EG[P^x]$.

Similar to the analysis about $T$-gate, we can obtain the same conclusion for the $T^\dagger$-gate.

For a quantum circuit $\mathcal{C}$ which contains $M$ $T/T^\dagger$-gates, we can recursively make use of the above method: each $T/T^\dagger$-gate is performed and followed by a encrypted gate $EG[P^x]$, where the secret bit $x$ varies every time; There are $M$ $T/T^\dagger$-gates and $M$ encrypted gates; These encrypted gates contains $M$ quantum measurements; All these measurements can be deferred until finishing the whole circuit $\mathcal{C}$. The quantum measurement in $EG[P^x]$ depends on the updating intermediate key $x$ before the execution of $T/T^\dagger$-gate, and the key-updating rule of $T/T^\dagger$-gate depends on the measurement result in the previous measurements. So all these $M$ quantum measurements must be performed alternately with the key-updating steps. (There are two alternative ways and each one results a version of QHE scheme in this article). In addition, the key-updating steps has the same order as the series of quantum gates, so all these $M$ measurements are performed in the same order as the execution of all $M$ $T/T^\dagger$-gates.

\subsection{Scheme \texttt{GT}£ºQHE scheme based on gate teleportation}
According to the above idea, we propose a non-interactive QHE scheme, in which the encryption algorithm is QOTP-encryption. In the homomorphic evaluation of quantum circuit $\mathcal{C}$, Server carries out a series of quantum gates in circuit $\mathcal{C}$ except that performing an additional $\mathrm{SWAP}$ operation after each execution of $T/T^\dagger$-gate. There is no interaction between Client and Server during the process of homomorphic evaluation.
In the process of decryption, Client alternately performs quantum measurements and key-updating, and finally performs a QOTP-decryption. There is no interaction between Client and Server during the decryption. The detailed description is presented as follows.

Suppose there is a unitary quantum circuit $\mathcal{C}$ acting on $n$ qubits. Assume it consists of the quantum gates from the set $\mathcal{S}=\{X,Z,H,P,CNOT,T,T^\dagger\}$, and its size is $N$ (There are $N$ gates in the circuit). The quantum gates in $\mathcal{C}$ are numbered from left to right, e.g. $Gate[1]$, $Gate[2]$, $\ldots$, $Gate[N]$. Denote $Gate[j]_w$ as the $j$th quantum gate acting on the $w$th qubit. For example, $Gate[j]=H$ acts on the $w$th qubit, denote $Gate[j]_w=H_w$ or $Gate[j]=H_w$. Denote $CNOT_{w,w'}$ as a two-qubit gate $CNOT$ acting on the $w$th and $w'$th qubits, where $w$th qubit is the control). Among the $N$ gates, assume the total number of $T$-gates and $T^\dagger$-gates is
$\sharp\{j:Gate[j]=T  \texttt{ or } T^\dagger\}=M$. Every $T$-gate and $T^\dagger$-gate has its own number $j_i$ ($i\leq j_i\leq N$, $1\leq i\leq M$) in the series of gates $\{Gate[j],j=1,\ldots,N\}$. Then $Gate[j_i]=T/T^\dagger$ and $j_i<j_{i+1}$. Suppose $Gate[j_i]$ ($1\leq i\leq M$) acts on $w_i$th qubit ($1\leq w_i\leq n$). That is, $Gate[j_i]=T_{w_i}/T_{w_i}^\dagger$.

Next we describe the QHE scheme $\texttt{GT}$ for the quantum circuit $\mathcal{C}$. The encryption algorithm is the QOTP encryption transformation; the plaintext data has $n$ qubits and the secret key $sk$ has $2n$ bits. Let QOTP key $sk=(x_0,z_0),x_0,z_0\in\{0,1\}^n$. The $w$th plaintext qubit is encrypted using the secret bit $(x_0(w),z_0(w))$. Once that a quantum gate is performed on an encrypted data, the QOTP key should be updated according to the key-updating rules (see \textbf{Appendix C}). After the $j$th ($1\leq j\leq N-1$) gate is performed, a new key (denoted as $(x_j,z_j),x_j,z_j\in\{0,1\}^n$) can be obtained through key-updating; It is nether the initial key nor final key, so we call it intermediate key. After the execution of the final gate and all the key-updating, a final key $(x_{final},z_{final})$ can be obtained, and can be used in QOTP decryption. It is worth to notice that, only the server knows the quantum circuit $\mathcal{C}$ (or the series of gates in $\mathcal{C}$), so Server must generate the key-updating functions according to the key-updating rules and the gates' sequence, and the key-updating functions should be sent to Client together with the encrypted results. Client computes the final key $(x_{final},z_{final})$ using the key-updating functions.

QHE scheme $\texttt{GT}$ for quantum circuit $\mathcal{C}$ can be described as the five parts: \textbf{Setup}, \textbf{Key Generation}, \textbf{Encryption}, \textbf{Evaluation}, and \textbf{Decryption}. In our scheme, \textbf{Setup} is an additional part compared with the usual definition of QHE. It preshares Bell states between Client and Server. Actually, these Bell states can be generated during the evaluation, e.g. before each $\mathrm{SWAP}$ operation.
In the QHE scheme, the first half of the $i$th Bell state is relabeled as $c_i$, and the other as $s_i$. Then the $i$th Bell state is denoted as $|\Phi_{00}\rangle_{c_i,s_i}$, $i=1,\ldots,M$.

\noindent\fbox{
\begin{minipage}{33.3em}
\begin{center}
    \textbf{Scheme \texttt{GT}£ºQHE scheme based on gate teleportation}
\end{center}

\begin{description}
  \item[1.Setup:] $\texttt{GT.Setup}(1^M)$. Generate $M$ Bell states $\{|\Phi_{00}\rangle_{c_i,s_i},i=1,\ldots,M\}$ as the ancillary states, where qubits $c_i,i=1,\ldots,M$ and qubits $s_i,i=1,\ldots,M$ are kept by Client and Server, respectively.
  \item[2.Key Generation:] $\texttt{GT.KeyGen}(1^n)$. Generate random bits $x_0,z_0\in\{0,1\}^n$, and output $sk=(x_0,z_0)$ as the secret key.
  \item[3.Encryption:] $\texttt{GT.Enc}(sk,\sigma)$. For any $n$-qubit data $\sigma$, Client performs QOTP encryption with the secret key $sk=(x_0,z_0)$, e.g. $\sigma\rightarrow \rho=X^{x_0}Z^{z_0}\sigma Z^{z_0}X^{x_0}$.
  \item[4.Evaluation:] $\texttt{GT.Eval}\left(\mathcal{C},\{\mathrm{qubit} s_i\}_{i=1}^M,\rho\right)$. Server should carry out the following two steps.

  \begin{minipage}{31.5em}
  \begin{description}
    \item[4-1.Quantum computation.] With the help of the ancillary qubits $s_i$, $i=1,\ldots,M$, Server carries out the quantum gates $Gate[1],Gate[2],\ldots,Gate[N]$ on the $n$-qubit encrypted data $\rho$. For each $j\in\{1,\ldots,N\}$, there are two cases as follows.

        \begin{minipage}{29.7em}
        \begin{description}
            \item[4-1-a] When $j\notin\{j_1,\ldots,j_M\}$, that means $Gate[j]\notin\{T,T^\dagger\}$, then Server carries out quantum gate $Gate[j]$;
            \item[4-1-b] When $j=j_i(1\leq i\leq M)$, that means $Gate[j]=Gate[j_i]=T_{w_i}$ or $T_{w_i}^\dagger$, then Server firstly performs quantum gate $Gate[j]$ on the qubit $w_i$ and then performs swap operation $\mathrm{SWAP}(\texttt{qubit} w_i, \texttt{qubit} s_i)$.
        \end{description}
        \end{minipage}

    \item[4-2.Generating $M+1$ key-updating functions $\{g_i\}_{i=1}^M$ and $f$.]

        \begin{minipage}{29.7em}
        \begin{description}
            \item[4-2-a] According to key-updating rules, Server generates the polynomial $\{g_i\}_{i=1}^M$ for one key bit $x_{j_i-1}(w_i)\in\{0,1\}$. Denote it as
            \begin{equation}
            x_{j_i-1}(w_i)=\begin{cases}
            g_1(x_0,z_0), \texttt{if } i=1;\\
            g_i(x_0,z_0,r_x(1),r_z(1),\ldots,r_x(i-1),r_z(i-1)), \\
            \texttt{ if } i=2,\ldots,M.
            \end{cases}
            \end{equation}

            \item[4-2-b] According to key-updating rules, Server generates the polynomial $f$ for the final key $(x_{final},z_{final})\in\{0,1\}^{2n}$. Denote it as \begin{equation}(x_{final},z_{final})=f(x_0,z_0,r_x(1),r_z(1),\ldots,r_x(M),r_z(M)).\end{equation}
        \end{description}
        \end{minipage}
  \end{description}
  \end{minipage}

  After finishing the steps \textbf{(4-1)(4-2)}, Server outputs the $n$-qubit encrypted result $\rho'$ and all the ancillary qubits $s_i,i=1,\ldots,M$, together with all the key-updating functions $f$ and $\{g_i\}_{i=1}^M$.
\end{description}
\end{minipage}
}

%
%\noindent\fbox{
%\begin{minipage}{33.3em}
%\begin{center}
%    \textbf{Scheme \texttt{GT} continued}
%\end{center}
%
%\begin{description}
%
%\end{description}
%\end{minipage}
%}

\noindent\fbox{
\begin{minipage}{33.3em}
\begin{center}
    \textbf{Scheme \texttt{GT} continued}
\end{center}

\begin{description}
  \item[5.Decryption:] $\texttt{GT.Dec}\left(sk,\{g_i\}_{i=1}^M,f,\{\mathrm{qubit} s_i, \mathrm{qubit} c_i\}_{i=1}^M,\rho'\right)$. According to secret key $sk=(x_0,z_0)$ and all the key-updating functions, Client performs quantum measurements on the $M$ pairs of qubits $\{(s_i,c_i)\}_{i=1}^M$, and QOTP-decryption on the $n$-qubit encrypted result $\rho'$. The detailed procedure has the following three steps.

  \begin{minipage}{31.5em}
  \begin{description}
    \item[5-1.Alternate execution of $\{g_i\}_{i=1}^M$-computing and  measurement.] For each $i=1,\ldots,M$, Client finishes $i$th round of computation which contains the two steps.

        \begin{minipage}{29.7em}
        \begin{description}
            \item[5-1-a.Computing measurement basis $\Phi(P^b)$.] According to secret key $sk=(x_0,z_0)$ and part of measurement results $r_x(1),r_z(1),\ldots,r_x(i-1),r_z(i-1)$, Client computes a key bit $b$ from the key-updating function $g_i$
            \begin{equation}
            b=\begin{cases}
            g_1(x_0,z_0), \texttt{if } i=1;\\
            g_i(x_0,z_0,r_x(1),r_z(1),\ldots,r_x(i-1),r_z(i-1)), \\
            \texttt{ if } i=2,\ldots,M.
            \end{cases}
            \end{equation}

            \item[5-1-b.Quantum measurement.] Based on the measurement basis $\Phi(P^b)$, Client performs quantum measurement on the qubit $c_i$ and qubit $s_i$, and obtains the measurement result $m_x,m_z\in\{0,1\}$. Let $(r_x(i),r_z(i))=(m_x,m_z)$.
        \end{description}
        \end{minipage}

        Alternately perform $M$ rounds of the two steps (5-1-a) and (5-1-b), and finally obtain all the measurement results $\{r_x(i),r_z(i)\}_{i=1}^M$.

    \item[5-2.Computing the final key $(x_{final},z_{final})$.] The secret key $sk=(x_0,z_0)$ and all the measurement results $\{r_x(i),r_z(i)\}_{i=1}^M$ are inputted into the function $f$, and the computation generates the final key $(x_{final},z_{final})$.
    \item[5-3.QOTP decryption.] According to the final key $(x_{final},z_{final})$, Client performs QOTP decryption transformation on the encrypted result, e.g. $\rho'\rightarrow \sigma'=Z^{z_{final}}X^{x_{final}}\rho'X^{x_{final}}Z^{z_{final}}$.
  \end{description}
  \end{minipage}
\end{description}
\end{minipage}
}

In the scheme \texttt{GT}, the measurement basis depends on one bit of some intermediate key, and the key bit should be computed from the key-updating functions $\{g_i\}_{i=1}^M$, which relate to the measurement results. Thus, Client must alternately perform quantum measurements and key-updating functions, and obtains the final key. The number of alternations is the same as the number of $T/T^\dagger$-gates in the circuit $\mathcal{C}$.

According to the above QHE scheme, all the quantum measurements are deferred until finishing the execution of all the quantum gates $\{Gate[j]\}_{j=1}^N$. Thus, both Server and Client need $M$-qubit quantum memory in order to temporarily store qubits $s_i$ and qubits $c_i$. When Server returns the encrypted result, the qubits $s_i,i=1,\ldots,M$ are also sent back to Client. Client performs quantum measurements on each pair of qubits $(s_i,c_i)$, and the measurement basis is computed from his/her secret key $sk=(x_0,z_0)$ and the key-updating functions, which are provided by Server.

\subsection{Analysis}
The scheme \texttt{GT} is analyzed from the correctness, security and quasi-compactness.

\begin{theorem}
Scheme \texttt{GT} is correct. That is, let $\mathcal{F}$ be the class of all unitary quantum circuits consisted of quantum gates in $\mathcal{S}=\{X,Z,H,P,CNOT,T,T^\dagger\}$, the scheme \texttt{GT} is $\mathcal{F}$-homomorphic.
\end{theorem}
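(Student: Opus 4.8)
The plan is to prove correctness by induction on the number of gates applied, maintaining the invariant that after Server has executed $Gate[1],\ldots,Gate[j]$ (with the $EG[P^x]$ measurements for the $T/T^\dagger$-gates among them conceptually applied), the $n$ data qubits are in the QOTP-encrypted state $X^{x_j}Z^{z_j}\,U_j\sigma\,U_j^\dagger\,Z^{z_j}X^{x_j}$, where $U_j=Gate[j]\cdots Gate[1]$ and $(x_j,z_j)$ is the intermediate key produced by the key-updating rules. The base case $j=0$ is exactly the encryption step \texttt{GT.Enc}. I would then treat the inductive step in two cases according to whether $Gate[j]$ is Clifford or not.

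For the Clifford case ($Gate[j]\in\{X,Z,H,P,CNOT\}$), the key fact is that Clifford gates normalize the Pauli group, so there exist $(x_j,z_j)$ with $Gate[j]\,X^{x_{j-1}}Z^{z_{j-1}}=X^{x_j}Z^{z_j}\,Gate[j]$ up to a global phase; applying $Gate[j]$ to the encrypted state therefore preserves the invariant, and the resulting $(x_j,z_j)$ is precisely what the key-updating rules of Appendix~C compute. This is a routine check gate by gate (e.g.\ $HX^xZ^z=X^zZ^xH$, $PX^xZ^z=X^xZ^{x\oplus z}P$ up to phase, and the standard $CNOT$ propagation), and global phases are harmless on the density-operator level.

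The non-Clifford case carries the real content. When $Gate[j]=Gate[j_i]=T_{w_i}$ (the $T_{w_i}^\dagger$ case being identical via the analogous identity and the key-update from the Remark), I would invoke the commutation $P^xTX^xZ^z=X^xZ^{x\oplus z}T$ (up to global phase) together with the correctness of the two-party computation of $EG[P^x]$ and Proposition~\ref{prp1}: executing $T$ followed by $EG[P^x]$ on qubit $w_i$ sends $X^{x_{j-1}}Z^{z_{j-1}}|\psi\rangle$ to $X^{x_{j-1}\oplus r_x(i)}Z^{x_{j-1}\oplus z_{j-1}\oplus r_z(i)}T|\psi\rangle$, matching the key-update $(x_{j_i},z_{j_i})=(x_{j_i-1}\oplus r_x(i),x_{j_i-1}\oplus z_{j_i-1}\oplus r_z(i))$ on qubit $w_i$. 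The bit driving the measurement basis $\Phi(P^b)$ is $b=x_{j_i-1}(w_i)$, which by the induction hypothesis is exactly the value computed by $g_i$ from $(x_0,z_0)$ and the earlier results $r_x(1),r_z(1),\ldots,r_x(i-1),r_z(i-1)$.

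Finally I would combine the single-gate analysis with the deferred-measurement argument of Section~\ref{sec41}: the principle of deferred measurement guarantees that moving all $M$ $P^x$-rotated Bell measurements past the intervening sub-circuits leaves the output unchanged, so Server may run the whole circuit coherently and Client may perform the $M$ measurements afterwards, alternating them with the key-updating steps in gate order. Iterating the invariant through $j=N$ yields $X^{x_{final}}Z^{z_{final}}\,\mathcal{C}\sigma\mathcal{C}^\dagger\,Z^{z_{final}}X^{x_{final}}$ with $(x_{final},z_{final})=f(x_0,z_0,r_x(1),r_z(1),\ldots)$, so the QOTP decryption of step (5-3) returns $\mathcal{C}\sigma\mathcal{C}^\dagger$. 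The main obstacle I expect is not any single gate identity but making the deferred-measurement bookkeeping rigorous: the measurement basis for the $i$th $T/T^\dagger$-gate depends on an intermediate key bit that itself depends on the outcomes of earlier measurements, so one must justify carefully that the alternating measure-then-update order faithfully reproduces the inline computation, which is exactly what the recursion of Section~\ref{sec41} is designed to accomplish.
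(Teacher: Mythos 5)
Your proposal is correct and follows essentially the same route as the paper: the paper's proof likewise reduces correctness to the key-updating identities of Appendix~C (your Clifford and $T/T^\dagger$ cases), the correctness of $EG[P^{x_{j_i-1}(w_i)}]$ split between \texttt{GT.Eval} and \texttt{GT.Dec}, and the deferred-measurement argument of Section~\ref{sec41}. Your explicit induction invariant on the intermediate keys just spells out what the paper states more tersely as ``the correctness of the key-updating rules ensures each intermediate key and final key can be computed correctly.''
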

\begin{proof}
It is sufficient to prove that, for any $n$-qubit unitary circuit $\mathcal{C}$ and any $n$-qubit data $\sigma$,
\begin{eqnarray}
&& \Delta\left(\texttt{GT.Dec}\left(sk,\{\mathrm{qubit} c_i\}_{i=1}^M, \texttt{GT.Eval}\left(\mathcal{C},\{\mathrm{qubit} s_i\}_{i=1}^M, X^{x_0}Z^{z_0}\sigma Z^{z_0}X^{x_0} \right) \right),\Phi_\mathcal{C}(\sigma) \right) \nonumber\\
&& \leq negl(n),
\end{eqnarray}
where the quantum circuit $\mathcal{C}$ consists of the gates $Gate[1]$,$Gate[2]$,$\ldots$,$Gate[N]$, and $\sharp\{j|Gate[j]\in\{T,T^\dagger\}\}=M$.

In the scheme \texttt{GT}, for each $Gate[j_i]\in\{T,T^\dagger\}$, its homomorphic evaluation is realized by using encrypted gate $EG(P^{x_{j_i-1}(w_i)})$, where $w_i$ indicates the qubit which the $Gate[j_i]$ acts on. The implementation of $EG(P^{x_{j_i-1}(w_i)})$ contains two parts --- $\mathrm{SWAP}$ and quantum measurement, which are finished in \texttt{GT.Eval} and \texttt{GT.Dec} respectively. The measurement in \texttt{GT.Dec} depends on one bit of the intermediate key. If the intermediate key is correct, then $EG(P^{x_{j_i-1} (w_i)})$ can be implemented correctly.

In the following, we prove the intermediate keys and final key can be computed correctly.
According to Section \ref{sec41}, there exists no effect on the final result whether or not defer the measurement part of $EG(P^{x_{j_i-1}(w_i)})$. Thus, for convenient to prove our result, we assume that all the two parts of $EG(P^{x_{j_i-1}(w_i)})$ are finished in \texttt{GT.Eval} and all the measurement results $r_x(1)$,$r_z(1)$,$\ldots$,$r_x(M)$,$r_z(M)$ are obtained in \texttt{GT.Eval}.

The algorithm \texttt{GT.Eval} performs quantum computing on encrypted data $X^{x_0}Z^{z_0}\sigma Z^{z_0}X^{x_0}$ according to the order of the quantum gates $Gate[j]$,$j=1,\ldots,N$. Synchronically, \texttt{GT.Eval} derives  key-updating functions $\{g_i\}_{i=1}^M$ and $f$ according to key-updating rules. The correctness of the key-updating rules can ensure that each intermediate key $(x_i,z_i)$ and final key $(x_{final},z_{final})$ can be computed correctly from the algorithm \texttt{GT.Dec} and the initial key $sk=(x_0,z_0)$. Then the encrypted result generated by $\texttt{GT.Eval}\left(\mathcal{C},\{\mathrm{qubit} s_i\}_{i=1}^M,X^{x_0}Z^{z_0}\sigma Z^{z_0}X^{x_0}\right)$ is just $$X^{x_{final}}Z^{z_{final}}\Phi_\mathcal{C}(\sigma)Z^{z_{final}}X^{x_{final}},$$ where $(x_{final},z_{final})=f(x_0,z_0,r_x(1),r_z(1),\ldots,r_x(M),r_z(M))$. Finally the decryption would output the plaintext result $\Phi_\mathcal{C}(\sigma)$, so
\begin{eqnarray*}
&& \Delta\left(\texttt{GT.Dec}\left(sk,\{\mathrm{qubit} c_i\}_{i=1}^M, \texttt{GT.Eval}\left(\mathcal{C},\{\mathrm{qubit} s_i\}_{i=1}^M, X^{x_0}Z^{z_0}\sigma Z^{z_0}X^{x_0} \right) \right),\Phi_\mathcal{C}(\sigma) \right) \\
&& \equiv 0.
\end{eqnarray*}
~~\hfill{}$\Box$
\end{proof}

\begin{theorem}
Scheme \texttt{GT} is perfectly secure.
\end{theorem}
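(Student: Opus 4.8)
The plan is to prove information-theoretic (perfect) security by showing that Server's entire view is independent of the plaintext $\sigma$. First I would make precise what Server can access during the whole protocol: the QOTP-encrypted input $\rho=X^{x_0}Z^{z_0}\sigma Z^{z_0}X^{x_0}$, the halves $s_1,\ldots,s_M$ of the preshared Bell pairs, the fixed quantum operations Server itself applies in \texttt{GT.Eval}, and the classical key-updating functions $\{g_i\}_{i=1}^M$ and $f$ it assembles. The crucial structural fact is non-interactivity: Client sends only the encrypted data, while the secret key $(x_0,z_0)$, the intermediate keys, and the measurement outcomes $r_x(i),r_z(i)$ are produced in \texttt{GT.Dec} and never reach Server. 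Hence it suffices to show that Server's joint (quantum plus classical) state, averaged over the uniformly random key, does not depend on $\sigma$.

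The core computation reduces Server's quantum state to a fixed $\sigma$-independent state. I would trace out Client's registers $c_1,\ldots,c_M$; since $\mathrm{Tr}_{c_i}|\Phi_{00}\rangle\langle\Phi_{00}|_{s_i c_i}=I_2/2$, each of Server's Bell halves is maximally mixed on its own. Then I would average over the key: the evaluation map $\mathcal{E}$ implemented by Server in \texttt{GT.Eval} depends only on the circuit $\mathcal{C}$ (which Server knows) and not on $\sigma$ or the key, so by linearity the average can be pushed through $\mathcal{E}$. Applying Eq.(\ref{eq1}) to the encrypted input gives
\begin{equation}
\frac{1}{2^{2n}}\sum_{x_0,z_0\in\{0,1\}^n}X^{x_0}Z^{z_0}\sigma Z^{z_0}X^{x_0}=\frac{I_{2^n}}{2^n},
\end{equation}
so Server's averaged input is the totally mixed state, and Server's averaged state after evaluation is $\mathcal{E}\big(\tfrac{I_{2^n}}{2^n}\otimes\bigotimes_{i=1}^M\tfrac{I_2}{2}\big)$, a fixed state carrying no information about $\sigma$.

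It remains to rule out leakage through the classical side information. I would observe that the key-updating functions $\{g_i\}_{i=1}^M$ and $f$ are generated from the key-updating rules and the gate sequence of $\mathcal{C}$ alone; they are symbolic functions of the (unknown) key and measurement variables and are independent of the actual values of $(x_0,z_0)$, $\{r_x(i),r_z(i)\}$, and $\sigma$. Therefore they add nothing correlated with the plaintext. Equivalently, one may argue compositionally: encryption is QOTP, whose perfect security follows from Eq.(\ref{eq1}), and each homomorphic $T/T^\dagger$-gate invokes an $EG[P^x]$ whose perfect secrecy is Proposition~\ref{prp1}; composing perfectly secure primitives keeps Server ignorant of $(x_0,z_0)$, of all intermediate keys, and of $(x_{final},z_{final})$, so $\Phi_\mathcal{C}(\sigma)$ stays perfectly hidden inside $X^{x_{final}}Z^{z_{final}}\Phi_\mathcal{C}(\sigma)Z^{z_{final}}X^{x_{final}}$.

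I expect the main obstacle to be bookkeeping the shared entanglement together with the $\mathrm{SWAP}$ operations: in \texttt{GT.Eval} a plaintext- and key-dependent state $TX^{x}Z^{z}|\alpha\rangle$ is swapped into a register $s_i$ that is eventually returned to Client, so before averaging Server's state genuinely depends on $\sigma$. The point to get right is that one must not track intermediate keys at all --- linearity of $\mathcal{E}$ lets the average over the single uniformly random initial key be taken first, after which Eq.(\ref{eq1}) collapses everything to a $\sigma$-independent state. A secondary care point is confirming that deferring the measurements (Section~\ref{sec41}) does not change Server's view, which holds because the deferred measurements act entirely on Client's registers.
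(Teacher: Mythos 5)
Your proof is correct and, in its primary form, takes a more direct route than the paper. The paper's own proof is purely compositional and qualitative: it notes that \texttt{GT.Enc} is a QOTP encryption (hence perfectly hiding the data and the key), that neither the evaluation nor the decryption phase involves any interaction, and that each of Client's measurements is part of an encrypted gate whose perfect security is Proposition~\ref{prp1}; it then concludes that the final key, and hence the result, stay hidden. You include exactly this compositional argument as your ``equivalently'' remark, but you lead with an explicit reduced-density-operator calculation: trace out Client's halves of the Bell pairs to get $I_2/2$ on each $s_i$, average the QOTP over the uniform key via Eq.~(\ref{eq1}) to collapse the encrypted input to $I_{2^n}/2^n$, and push the average through the fixed, circuit-determined evaluation channel to obtain a single $\sigma$-independent state; you also separately dispose of the classical side information by observing that the key-updating functions are determined by $\mathcal{C}$ alone and are generated by Server itself. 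Your version is more rigorous --- it exhibits the fixed state required by the paper's definition of perfect security and explicitly handles the point, left implicit in the paper, that before averaging the registers $s_i$ carry key- and plaintext-dependent states that are eventually returned to Client --- at the cost of being longer; the paper's version buys brevity by using the perfect security of QOTP and of $EG[P^x]$ as black boxes.
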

\begin{proof}
Firstly, the algorithm \texttt{GT.Enc} is a QOTP encryption transformation, so it can perfectly encrypt the plaintext and the secret key $sk=(x_0,z_0)$ is also hidden perfectly. During the evaluation procedure, there is no interaction between Server and Client, so Server cannot obtain any information about the plaintext and the key.

Secondly, during the decryption procedure, Client carries out some classical computation and quantum measurements, and there is no interaction in the computation. Then Client's classical computation would reveal no information. Each measurement is part of an encrypted gate, which has been proved to be perfectly secure in Proposition \ref{prp1}, thus Client's quantum measurements would reveal no information too.

Finally, Client obtains the perfectly secure final key and performs QOTP decryption which is also perfectly secure. Thus, the scheme \texttt{GT} is perfectly secure.
~~\hfill{}$\Box$
\end{proof}

\begin{theorem}
Scheme \texttt{GT} is a QHE scheme with $M\log M$-quasi-compactness.
\end{theorem}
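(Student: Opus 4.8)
The plan is to work straight from the definition of $f(M)$-quasi-compactness of Broadbent and Jeffery \cite{Broadbent2015}: a scheme is $f(M)$-quasi-compact when the complexity of its decryption procedure is bounded by $f(M)\cdot\mathrm{poly}(n)$, independent of the total circuit size $N$, where $M$ is the number of $T/T^\dagger$-gates. Thus the theorem reduces to accounting for the resources consumed by \texttt{GT.Dec} and exhibiting an $M\log M$ dependence on $M$, with every remaining factor a circuit-independent polynomial in the number of data qubits $n$.

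First I would decompose \texttt{GT.Dec} into its three stages and cost them separately: stage (5-1), the $M$ alternating rounds of ``compute a measurement basis $\Phi(P^b)$ from $g_i$, then measure the pair $(s_i,c_i)$''; stage (5-2), the single evaluation of the final-key function $f$; and stage (5-3), the QOTP decryption on the $n$-qubit register. Stage (5-3) is a fixed Pauli correction on $n$ qubits, hence $O(n)$ and manifestly circuit-independent, and the $M$ two-qubit measurements in (5-1) cost $O(M)$ elementary quantum operations (each is a local $P^{b\dagger}$ rotation followed by a Bell measurement). Consequently the whole question collapses to the cost of the classical bookkeeping, namely evaluating the $M$ basis-selecting functions $\{g_i\}_{i=1}^M$ together with the final-key function $f$.

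The structural fact I would isolate is that, because the $P$-error is cancelled \emph{exactly} by $EG[P^x]$ --- via the identity $P^xTX^xZ^z=X^xZ^{x\oplus z}T$ together with the update $(x,z)\mapsto(x\oplus r_x,\,x\oplus z\oplus r_z)$ --- every key-update rule for every gate of $\mathcal{S}$ is affine over $\mathbb{F}_2$. Hence each $g_i$, and each output bit of $f$, is an $\mathbb{F}_2$-affine function of the initial key $(x_0,z_0)$ and the bits $r_x(1),r_z(1),\ldots$ produced by earlier rounds. This is precisely the feature that makes the bookkeeping a \emph{linear} computation rather than the higher-degree one responsible for the $M^2$ blow-up of the \texttt{EPR} scheme. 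I would then bound the total work of evaluating all $M$ affine maps $g_i$ and $f$: propagating the $2M$ fresh measurement bits into the running $2n$-bit key through a shared, balanced combination (rather than re-evaluating each function from scratch) should give $O(M\log M)$ field operations, times a circuit-independent $\mathrm{poly}(n)$ for the key width, in place of the naive $O(M^2)$. Summing the three stages yields decryption complexity $O(M\log M)\cdot\mathrm{poly}(n)$, which is exactly $M\log M$-quasi-compactness.

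The main obstacle is this last bound: establishing that the \emph{sequentially dependent} functions $\{g_i\}$ and $f$ can be evaluated in $O(M\log M)$ work rather than $O(M^2)$. The delicate point is the unavoidable data dependence --- $g_i$ requires the outcomes $r_x(k),r_z(k)$ for $k<i$, and the $i$-th measurement basis cannot be fixed until its predecessors are measured --- so the rounds cannot simply be parallelised. I would address this by exploiting that all maps are affine and by sharing the partial key-updates across rounds in a prefix/divide-and-conquer fashion, and I would separately check that the $\mathrm{poly}(n)$ factors coming from the $n$-qubit key width are genuinely independent of $N$, so that they are absorbed into the polynomial part of the quasi-compactness bound and do not inflate the $M$-dependence. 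This is also the step that the variant \texttt{VGT} reorganises in order to shave the stray $\log M$ and reach the optimal $M$-quasi-compactness.
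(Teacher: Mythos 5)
There is a genuine gap at exactly the point you flag as ``the main obstacle'': the $O(M\log M)$ bound on the classical bookkeeping is never established, and the mechanism you propose for it does not analyze the scheme \texttt{GT} as specified. In \texttt{GT.Dec}, each basis-selecting function $g_i$ is evaluated \emph{from scratch} as a function $g_i:\{0,1\}^{2n+2(i-1)}\rightarrow\{0,1\}$ of the initial key and all earlier measurement outcomes, and the final key comes from a single map $f:\{0,1\}^{2n+2M}\rightarrow\{0,1\}^{2n}$; these functions are handed to Client by Server in precisely that form. Your ``shared, balanced combination'' that propagates the $2M$ fresh bits into a running $2n$-bit key is exactly the reorganization that defines \texttt{VGT} (the functions $f_1,\ldots,f_{M+1}$ updating the intermediate key round by round), and if carried out it yields $M\cdot\mathrm{poly}(n)$, i.e.\ the stronger $M$-quasi-compactness --- but for a different decryption algorithm. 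Moreover, for \texttt{GT} as written the $g_i$ are XORs of \emph{arbitrary} circuit-dependent subsets of the variables, not prefixes of a common sequence, so there is no a priori reason a prefix/divide-and-conquer scheme amortizes their joint evaluation to $O(M\log M)$ total work; a dense $M\times(2n+2M)$ matrix-vector product over $\mathbb{F}_2$ costs $\Theta(M^2+Mn)$ without further structure.

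The paper's own argument is different and much more direct: it notes that each key-updating function is an XOR of binary variables, charges each $g_i$ a cost of at most $\log_2(2n+2(i-1))$ (the depth of a balanced XOR tree over its inputs) and $f$ a cost of at most $2n\log_2(2n+2M)$, sums these to get classical complexity $O((M+n)\log_2(M+n))$, adds the quantum cost $M+2n$ for the $M$ rotated Bell measurements and the QOTP decryption, and reads off the $M\log M$ dependence on the circuit. Your stage decomposition, the observation that all key updates are affine over $\mathbb{F}_2$, and the costing of the quantum parts all match the paper; but the quantitative heart of the theorem is the per-function $\log$ bound and its summation, which your proposal replaces with an unproven amortization claim that, where it can be made to work, proves a different theorem about a different scheme.
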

\begin{proof}
In the algorithm \texttt{GT.Eval}, Server should generate $M+1$ key-updating functions
\begin{equation}
\begin{cases}
g_i:\{0,1\}^{2n+2(i-1)}\rightarrow\{0,1\},i=1,\ldots,M;\\
f:\{0,1\}^{2n+2M}\rightarrow\{0,1\}^{2n}.
\end{cases}
\end{equation}
Each key-updating function can be expressed with the XOR of some binary variables.
Then, the computational complexity of $g_i$ is at most $\log_2(2n+2(i-1))$, $i=1,\ldots,M$ and the computational complexity of $f$ is at most $2n\log_2(2n+2M)$. Thus, the classical complexity of decryption procedure is at most $2n\log_2(2n+2M)+\sum_{i=1}^M \log_2(2n+2(i-1))=O((M+n)\log_2(M+n))$. The decryption procedure contains $M$ quantum measurements and $n$-qubit QOTP decryption, so its quantum complexity is $M+2n$. So time complexity of \texttt{GT.Dec} is $O((M+n)\log_2(M+n))$. Thus the dependence of the complexity of \texttt{GT.Dec} on the evaluated circuit $\mathcal{C}$ is $M\log M$.
~~\hfill{}$\Box$
\end{proof}

Scheme \texttt{GT} has the following features.
\begin{enumerate}
  \item If the quantum data has $n$ qubits, then the secret key has $2n$ qubits and the encrypted data has $n$ qubits, and the encrypted result obtained by \texttt{GT.Eval} preserves the same size as the encrypted data.
  \item The number of Bell states is the same as that of $T/T^\dagger$ in $\mathcal{C}$.
  \item There are $M$ ``rotated Bell measurements'', which generate $2M$ classical bits.
  \item The evaluation key is not necessary in the algorithm \texttt{GT.Eval}.
  \item During the key-updating procedure (step \texttt{5-1}), all the intermediate keys are not stored. Every intermediate key and the final key have the same size as the secret key.
  \item Client must know the total number $M$ of $T/T^\dagger$-gates in Server's circuit $\mathcal{C}$. The number determines how many times Client would perform quantum measurements.
  \item It is unnecessary for Client to know any information of Server's quantum circuit $\mathcal{C}$, except the total number of $T/T^\dagger$-gates.
\end{enumerate}

\begin{remark}
In \texttt{GT.Setup}, Client and Server preshares $M$ Bell states, where $M$ is the total number of $T/T^\dagger$-gates. Actually, it is unnecessary for them to preshare Bell states. Instead, the Bell state can be produced by Server whenever he/she performs a $T/T^\dagger$-gate during the evaluation procedure. Moreover, Server could produce as many Bell states as required. Finally Server's qubits are all sent back to Client together with the encrypted result.
\end{remark}

\subsection{Scheme \texttt{VGT}: a variation of \texttt{GT} with optimal quasi-compactness}
In this section, the scheme \texttt{GT} is slightly modified and a variation named \texttt{VGT} is presented. Before we describe the scheme \texttt{VGT}, we prove an optimal quasi-compactness bound. Finally the scheme \texttt{VGT} can achieve this optimal bound.
\begin{theorem}
For any non-interactive and $\mathcal{F}$-homomorphic QHE scheme with perfect security, it must be quasi-compact and $M$-quasi-compactness is the optimal bound.
\end{theorem}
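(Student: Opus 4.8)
The plan is to split the statement into two halves: first, that full compactness is impossible, so quasi-compactness is the strongest property one can hope for; and second, that the circuit-dependence of decryption cannot be pushed below $\Omega(M)$, so that the $M$-quasi-compactness attained by \texttt{VGT} is optimal. For the first half I would argue by contradiction from the enhanced no-go results \cite{Yu2014,Newman2018,Lai2018}. If a non-interactive, $\mathcal{F}$-homomorphic, perfectly secure QHE scheme were compact --- i.e. its decryption were \emph{independent} of the evaluated circuit --- it would be exactly a perfectly secure non-interactive QFHE, which those results forbid. Hence the decryption of any scheme meeting our three hypotheses must genuinely depend on the circuit, and the best attainable notion is quasi-compactness in the sense of \cite{Broadbent2015}.

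For the quantitative lower bound I would first reduce the encryption to a QOTP. Perfect security forces the ciphertext, averaged over the key, to equal the maximally mixed state $I_{2^n}/2^n$ of Eq.(\ref{eq1}); by Boykin--Roychowdhury \cite{Boykin2003} and Ambainis et al.\cite{Ambainis2000} the encryption must therefore be, up to a fixed unitary absorbed into encryption and decryption, a Pauli one-time pad $X^xZ^z$. In this normal form every Clifford gate commutes through the pad as a purely classical Pauli-frame update and so contributes nothing to the \emph{quantum} decryption cost, whereas each of the $M$ non-Clifford $T/T^\dagger$-gates necessarily produces a phase error governed by a fresh secret key bit, via $TX^xZ^z|\phi\rangle=(P^\dagger)^xX^xZ^{x\oplus z}T|\phi\rangle$. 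Non-interaction together with perfect security prevent Server from having removed any of these errors during evaluation, so all $M$ corrections must be performed inside decryption.

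Next I would show that these $M$ phase corrections cannot be merged into fewer than $M$ quantum operations. The key bit controlling the $i$-th correction is determined, through the Clifford frame, by the outcomes of the earlier corrections --- this is precisely the forced alternation between the functions $\{g_i\}_{i=1}^M$ and the measurements in step \texttt{5-1} of \texttt{GT} --- so the corrections are sequentially dependent, and each must be resolved by its own measurement before the next basis $\Phi(P^b)$ can even be computed. This pins the decryption at $\geq M$ rotated Bell measurements, hence complexity $\Omega(M)$. Matching this against the $O(M)$ decryption of \texttt{VGT} established in the next subsection yields that $M$ is the optimal quasi-compactness bound.

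The hard part will be making the lower bound hold against \emph{every} admissible scheme rather than only against QOTP-based ones. The no-go reduction cleanly rules out the extreme case of circuit-independent ($O(1)$) decryption, but excluding all sublinear behaviours $o(M)$ requires either the structural normalization to QOTP --- which one must argue preserves $\mathcal{F}$-homomorphism and non-interaction under the absorbed fixed unitary --- or an amplification/composition argument showing that any scheme with $o(M)$ quasi-compactness could be bootstrapped into a compact one, contradicting \cite{Yu2014,Newman2018,Lai2018}. Ruling out a clever global encoding that corrects all $M$ phase errors jointly, and thereby nailing the linear-in-$M$ barrier rather than merely a nonconstant one, is where I expect the genuine difficulty to reside.
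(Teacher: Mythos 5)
Your proposal follows essentially the same route as the paper's own proof: perfect security forces a QOTP normal form, $\mathcal{F}$-homomorphism forces $M$ key-dependent $P$-errors that non-interaction prevents Server from correcting, and the sequential dependence of the intermediate keys ($key_j=h_j(key_{j-1})$) forces the $M$ corrections to be resolved one by one in decryption, giving the $\Omega(M)$ lower bound matched by \texttt{VGT}. The ``hard part'' you flag at the end --- extending the bound beyond QOTP-normal-form schemes and ruling out a joint correction of all $M$ phase errors in $o(M)$ steps --- is likewise asserted rather than proved in the paper, so your attempt is faithful to (and candid about the limitations of) the published argument.
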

\begin{proof}
For any QHE scheme \texttt{QHE=\{QHE.KeyGen,QHE.Enc,QHE.Eval,QHE.Dec\}} with no interaction, Client and Server must finish the following three stages. There are only two transmissions, where the first one is in the stage 1 and the second one is in the stage 2.
\begin{description}
  \item[Stage 1.] Client performs the algorithm \texttt{QHE.Enc} on quantum data, and sends the encrypted data to Server;
  \item[Stage 2.] Server performs the homomorphic evaluation \texttt{QHE.Eval}, and sends the encrypted result to Client;
  \item[Stage 3.] Client performs the algorithm \texttt{QHE.Dec} on the encrypted result, and obtains the plaintext result.
\end{description}

All these stages are analyzed one by one.
\begin{description}
  \item[Stage 1.] Because \texttt{QHE} is required to be perfectly secure, \texttt{QHE.Enc} must adopt QOTP encryption transformation, and the secret key $sk$ would not be revealed. Thus, Server cannot obtain any information about the secret key from the first transmission.
  \item[Stage 2.] Because \texttt{QHE} is $\mathcal{F}$-homomorphic, it must allow the homomorphic evaluation of $T$-gate. In the homomorphic evaluation \texttt{QHE.Eval}, once that a $T$-gate is executed on the encrypted data, a key-dependent $P$-error would occur. If Server's quantum circuit $\mathcal{C}$ contains $M$ $T$-gates, then the algorithm \texttt{QHE.Eval} would cause $M$ $P$-errors, and each $P$-error is controlled by a different bit of certain intermediate key. In Stage 1, Server cannot obtain any information about the secret key, so he/she cannot correct the $M$ $P$-errors. Thus, the encrypted result in Stage 2 has $M$ $P$-errors, and the $M$ $P$-errors are inevitably continued to Stage 3.
  \item[Stage 3.] In the algorithm \texttt{QHE.Dec}, the $M$ $P$-errors must be corrected. Each $P$-error depends on the value of a different bit in certain intermediate key. That means, the $M$ $P$-errors are controlled by $M$ intermediate keys (Denote them as $key_j$,$j=1,\ldots,M$). According to key-updating rules, there exists logically computational relations between these different keys (Concretely there exists a family of functions $\{h_j\}_{j=1}^M$ which satisfy the relations $key_j=h_j(key_{j-1}),j\in\{1,\ldots,M-1\}$, where $key_0$ is the secret key $sk$), and these relations $\{h_j\}_{j=1}^M$ are determined by the quantum circuit $\mathcal{C}$. Thus, in the second transmission, these relations $\{h_j\}_{j=1}^M$ must be included in the message beside the encrypted result. So Client must finish at least $M$ steps of computation in order to obtains these $M$ keys, and finally carries out the correction of $M$ $P$-errors. Moreover, these keys are sequentially dependent, e.g. $key_j=h_j(key_{j-1})$, so the $M$ steps of computation cannot be executed in parallel. Instead, they must be finished one by one. Thus the time complexity of \texttt{QHE.Dec} must be dependent on $M$, and is at least $O(M)$.
\end{description}
Thus, for any non-interactive and $\mathcal{F}$-homomorphic QHE scheme with perfect security, $M$-quasi-compactness is the optimal bound.
~~\hfill{}$\Box$
\end{proof}

Next we present the variation \texttt{VGT} of the scheme \texttt{GT}.  Scheme \texttt{VGT} is a QHE scheme with $M$-quasi-compactness, and its security is the same as \texttt{GT}.

Compared to \texttt{GT}, the scheme \texttt{VGT}'s differences only exist in the evaluation and decryption procedures. So we only describe the modified parts in the algorithms \texttt{VGT.Eval} and \texttt{VGT.Dec}. Let $j_0=0$ in the following description.

The evaluation procedure of \texttt{VGT} is expressed with $\texttt{VGT.Eval}(\mathcal{C},\{\mathrm{qubit} s_i\}_\{i=1\}^M,\rho)$. Concretely, the step $(\textbf{4-2})$ in \texttt{GT.Eval} is replaced with the step $(\textbf{4-2}')$ as follows.

\noindent\fbox{
\begin{minipage}{33.3em}
\begin{description}
  \item[$\textbf{4-2}'$.] Generating $2M+1$ key-updating functions $\{g_i\}_{i=1}^M$ and $\{f_i\}_{i=1}^{M+1}$.

          \begin{minipage}{30.7em}
          \begin{description}
            \item[4-2$'$-a.] According to key-updating rules, Server generates the polynomial $\{g_i\}_{i=1}^M$ for one key bit $x_{j_i-1}(w_i)\in\{0,1\}$. Denote it as
            \begin{equation}
            x_{j_i-1}(w_i)=g_i(x_{j_{i-1}},z_{j_{i-1}}),i=1,\ldots,M.
            \end{equation}

            \item[4-2$'$-b.] According to key-updating rules, Server generates the polynomial $\{f_i\}_{i=1}^M$ for the intermediate key $(x_{j_i},z_{j_i})\in\{0,1\}^{2n}$. Denote it as \begin{equation}(x_{j_i},z_{j_i})=f_i(x_{j_{i-1}},z_{j_{i-1}},r_x(i),r_z(i)),i=1,\ldots,M.\end{equation}

            \item[4-2$'$-c.] According to key-updating rules, Server generates the polynomial $f_{M+1}$ for the final key $(x_{final},z_{final})\in\{0,1\}^{2n}$. Denote it as \begin{equation}(x_{final},z_{final})=f_{M+1}(x_{j_M},z_{j_M}).\end{equation}
        \end{description}
        \end{minipage}

  \item[]
\end{description}
\end{minipage}
}

The decryption procedure of \texttt{VGT} is expressed with $$\texttt{VGT.Dec}(sk,\{g_i\}_{i=1}^M,\{f_i\}_{i=1}^{M+1},\{\mathrm{qubit} s_i, \mathrm{qubit} c_i\}_{i=1}^M,\rho').$$ Concretely, the steps (\texttt{5-1}) and (\texttt{5-2}) in \texttt{GT.Dec} are replaced with the steps (\texttt{5-1}$'$) and (\texttt{5-2}$'$) as follows.

\noindent\fbox{
\begin{minipage}{33.3em}
\begin{description}
  \item[5-1$'$.Alternate execution of key-updating and measurement.] For $i=1$ to $M$, Client finishes the computation (\texttt{5-1$'$-a})(\texttt{5-1$'$-b})(\texttt{5-1$'$-c}) (or ``$g_i$-measurement-$f_i$" in brief). Finally Client obtains the intermediate key $(x_{j_M},z_{j_M})$.

  \begin{minipage}{30.7em}
  \begin{description}
            \item[5-1$'$-a. Compute $g_i$ and obtain measurement basis $\Phi(P^b)$.] According to the key $(x_{j_{i-1}},z_{j_{i-1}})$ and the key-updating function $g_i$, Client computes a key bit $b=g_i(x_{j_{i-1}},z_{j_{i-1}})$. (When $i=1$, $(x_{j_{i-1}},z_{j_{i-1}})$ is the secret key $sk=(x_0,z_0)$.)

            \item[5-1$'$-b. Quantum measurement.] Based on the measurement basis $\Phi(P^b)$, Client performs quantum measurement on $\mathrm{qubit} c_i$ and $\mathrm{qubit} s_i$, and obtains the measurement result $m_x$,$m_z$. Let $(r_x(i),r_z(i))=(m_x,m_z)$.

            \item[5-1$'$-c. Compute $f_i$ and update the key.] According to the key-updating function $f_i$, Client computes the intermediate key $(x_{j_i},z_{j_i})=f_i(x_{j_{i-1}},z_{j_{i-1}},r_x(i),r_z(i))$.
        \end{description}
  \end{minipage}

  \item[5-2$'$.] According to the intermediate key $(x_{j_M},z_{j_M})$, Client computes $f_{M+1}$ and obtains the final key $(x_{final},z_{final})$.
\end{description}
\end{minipage}
}

\begin{theorem}
Scheme \texttt{VGT} is a correct and perfectly secure QHE scheme.
\end{theorem}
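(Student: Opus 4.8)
The plan is to reduce both claims to the corresponding theorems already proved for \texttt{GT}, exploiting that \texttt{VGT} differs from \texttt{GT} only in the internal organization of Client's classical key-updating: all quantum operations---the QOTP encryption, the $\mathrm{SWAP}$ after each $T/T^\dagger$-gate, and the $M$ $P^b$-rotated Bell measurements---are left untouched.

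For correctness, I would first note that between two consecutive $T/T^\dagger$-gates $Gate[j_{i-1}]$ and $Gate[j_i]$ the circuit $\mathcal{C}$ contains only Clifford gates, whose key-updating rules (Appendix C) are deterministic and independent of every measurement outcome. Hence the basis-selecting key bit $x_{j_i-1}(w_i)$ is a function of the intermediate key $(x_{j_{i-1}},z_{j_{i-1}})$ alone, which is exactly what $g_i$ computes in \texttt{VGT}; and the transition from $(x_{j_{i-1}},z_{j_{i-1}})$ to $(x_{j_i},z_{j_i})$ is the composition of these deterministic Clifford updates with the single measurement-dependent update contributed by the $i$-th $EG[P^{x_{j_i-1}(w_i)}]$, so it is a function of $(x_{j_{i-1}},z_{j_{i-1}},r_x(i),r_z(i))$ alone, which is exactly $f_i$. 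The central step is to verify this Markov-type structure, i.e.\ that each $(x_{j_i},z_{j_i})$ is independent of the later outcomes $r_x(i+1),\ldots,r_z(M)$; this is immediate because the $P$-error created by the $i$-th $T/T^\dagger$-gate is corrected using only the $i$-th measurement. Granting this, the iterated map $f_{M+1}\circ f_M\circ\cdots\circ f_1$ returns precisely the final key that \texttt{GT} computes via its single function $f$, and each $g_i$ selects the same basis bit as in \texttt{GT}. Thus \texttt{VGT} produces the identical encrypted result and decrypts with the identical final key, so correctness follows from the correctness theorem for \texttt{GT}.

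For perfect security, the decisive observation is that Server's view in \texttt{VGT} coincides with its view in \texttt{GT}. The encryption is still QOTP, which perfectly hides both the plaintext and the secret key $(x_0,z_0)$; there is no interaction during evaluation; and each of the $M$ measurements is still one half of an encrypted gate $EG[P^b]$, whose perfect security is Proposition \ref{prp1}. The changes made by \texttt{VGT} touch only Client's local classical computation (replacing $f$ by $\{f_i\}_{i=1}^{M+1}$ and restricting the domain of each $g_i$), and this computation is never transmitted to Server. Consequently no extra information leaks, and the perfect-security theorem for \texttt{GT} transfers verbatim.

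The hardest part will be the correctness bookkeeping---confirming that folding the deterministic Clifford updates into $g_i$ and $f_i$ is legitimate and that no intermediate key secretly depends on a future measurement outcome. Once that Markov structure is pinned down, the reduction to \texttt{GT} is routine.
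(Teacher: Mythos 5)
Your proposal is correct and follows the same route as the paper, which likewise proves this theorem by reduction to the corresponding \texttt{GT} theorems, observing that the only change is how the key-updating functions are organized. In fact the paper omits all details at this point, whereas you supply the one substantive check it skips---that the key-updating rules have the Markov structure making each $(x_{j_i},z_{j_i})$ a function of $(x_{j_{i-1}},z_{j_{i-1}},r_x(i),r_z(i))$ alone, so that the factored functions $\{g_i\},\{f_i\}$ reproduce exactly what \texttt{GT}'s monolithic functions compute---so your write-up is, if anything, more complete than the paper's.
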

\begin{proof}
Compare \texttt{VGT} with \texttt{GT}, the only variation is that they generate the key-updating functions in different ways. It would not affect the correctness and security of the QHE scheme, and the proof is very similar to \texttt{GT}. The details are omitted here.
~~\hfill{}$\Box$
\end{proof}

Finally, we prove the scheme \texttt{VGT} achieves the optimal quasi-compactness.

\begin{theorem}
Scheme \texttt{VGT} is a QHE scheme with $M$-quasi-compactness.
\end{theorem}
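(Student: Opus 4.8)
The plan is to analyze the decryption algorithm \texttt{VGT.Dec} component by component and to show that its total running time scales linearly in the circuit-dependent parameter $M$, in contrast to the $M\log M$ obtained for \texttt{GT}. Concretely, \texttt{VGT.Dec} consists of the $M$ rounds of step (\texttt{5-1}$'$) --- each round being a ``$g_i$--measurement--$f_i$'' triple --- followed by the single evaluation of $f_{M+1}$ in step (\texttt{5-2}$'$) and the $n$-qubit QOTP decryption in step (\texttt{5-3}). I would bound the cost of each piece and sum, using the same cost convention as in the \texttt{GT} proof, namely that an XOR of $m$ binary variables costs at most $\log_2 m$.

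The key structural observation, and the reason \texttt{VGT} beats \texttt{GT}, is that in \texttt{VGT} every key-updating function has an input whose size is independent of both $i$ and $M$. First I would record the domains read off from step (\texttt{4-2}$'$): $g_i:\{0,1\}^{2n}\to\{0,1\}$ and $f_i:\{0,1\}^{2n+2}\to\{0,1\}^{2n}$ for $i=1,\ldots,M$, together with $f_{M+1}:\{0,1\}^{2n}\to\{0,1\}^{2n}$. This contrasts sharply with \texttt{GT}, where the single function $f$ ingested all $2M$ measurement bits at once and each $g_i$ took a growing input of size $2n+2(i-1)$, producing the $\sum_i\log_2(n+i)\approx M\log M$ blow-up. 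By threading the computation through fixed-size intermediate keys $(x_{j_i},z_{j_i})$, \texttt{VGT} caps the arity of every function it must evaluate.

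With these domains fixed, I would bound: computing $g_i$ costs at most $\log_2(2n)$; computing $f_i$ costs at most $2n\log_2(2n+2)$; computing $f_{M+1}$ costs at most $2n\log_2(2n)$; each quantum measurement in (\texttt{5-1}$'$-b) is $O(1)$; and the final QOTP decryption is $O(n)$. Summing step (\texttt{5-1}$'$) over the $M$ rounds gives classical cost $M\bigl(\log_2(2n)+2n\log_2(2n+2)\bigr)$, and adding (\texttt{5-2}$'$) and (\texttt{5-3}) yields total classical complexity $O(Mn\log n)$ and quantum complexity $M+2n$. Since each round's cost is a function of the data size $n$ alone and not of $M$, the dependence of the complexity of \texttt{VGT.Dec} on the evaluated circuit $\mathcal{C}$ is exactly $O(M)$, so \texttt{VGT} is $M$-quasi-compact, matching the optimal bound proved above.

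The main obstacle is justifying the hidden claim behind the bound on $f_i$: that each $f_i$ can genuinely be evaluated with a budget depending on $n$ only, even though it may encode an arbitrarily long stretch of Clifford gates lying between two consecutive $T/T^\dagger$-gates. Here I would argue that the key-updating rules (Appendix C) for every gate in $\mathcal{S}$ are $\mathbb{F}_2$-affine in the key bits, so any composition of them remains a single affine map on $\{0,1\}^{2n+2}$; its per-output-bit support is therefore at most $2n+2$ regardless of how many gates were composed. This is precisely what shifts the heavy bookkeeping into \texttt{VGT.Eval} (performed by Server, who hands Client the compact XOR representation) and keeps Client's per-$T$-gate work at $\mathrm{poly}(n)$. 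Once this linearity argument is in place, everything else is a routine summation.
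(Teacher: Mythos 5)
Your proposal is correct and follows essentially the same route as the paper's proof: read off the fixed-arity domains of the $2M+1$ key-updating functions from step (\textbf{4-2}$'$), bound each $g_i$ by $\log_2(2n)$, each $f_i$ by $2n\log_2(2n+2)$, and $f_{M+1}$ by $2n\log_2(2n)$, sum to get classical cost $O(Mn\log_2 n)$ and quantum cost $M+2n$, and conclude that the circuit-dependence is $M$. Your closing paragraph making explicit the $\mathbb{F}_2$-affinity of the key-updating rules is a useful elaboration of the paper's unargued claim that each function ``can be expressed with the XOR of some binary variables,'' but it does not change the argument.
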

\begin{proof}
According to the scheme \texttt{VGT}, Server should generate $2M+1$ key-updating functions in the evaluation procedure.
\begin{equation}
\begin{cases}
g_i:\{0,1\}^{2n}\rightarrow\{0,1\},i=1,\ldots,M, \\
f_i:\{0,1\}^{2n+2}\rightarrow \{0,1\}^{2n},i=1,\ldots,M,  \\
f_{M+1}:\{0,1\}^{2n}\rightarrow\{0,1\}^{2n}.
\end{cases}
\end{equation}
Each key-updating function can be expressed with the XOR of some binary variables.
Then, the computational complexity of $g_i$ is at most $\log_2(2n)$, $i=1,\ldots,M$, and the complexity of $f_i$ is at most $2n\log_2(2n+2)$,$i=1,\ldots,M$. The complexity of $f_{M+1}$ is at most $2n\log_2(2n)$. Thus, the classical complexity of \texttt{VGT.Dec} is at most $M\log_2(2n)+2nM\log_2(2n+2)+2n\log_2(2n)=O(Mn\log_2 n)$. The decryption procedure contains $M$ quantum measurements and $n$-qubit QOTP decryption, so its quantum complexity is $M+2n$. Then the complexity of \texttt{VGT.Dec} is $O(Mn\log_2 n)$. Thus, the dependence of the complexity of \texttt{VGT.Dec} on the evaluated circuit $\mathcal{C}$ is $M$.
~~\hfill{}$\Box$
\end{proof}

\section{Conclusions and Discussions}
In this article, encrypted gate $EG[U]$ is defined. Based on gate teleportation, a two-party computation scheme is proposed to implement $EG[U]$ or $EG[P^x]$. Then, we present a non-interactive way to remove the $P$-error which is generated during the homomorphic evaluation of $T/T^\dagger$-gate. In this way, we construct the detailed QHE scheme \texttt{GT} and a variation \texttt{VGT}. Both of them are perfectly secure, $\mathcal{F}$-homomorphic and quasi-compact. Moreover, the schemes \texttt{GT} and \texttt{VGT} are proved to have $M\log M$-quasi-compactness and $M$-quasi-compactness, respectively. We prove the $M$-quasi-compactness is the optimal.

The QHE schemes are constructed for unitary quantum circuit that contains no quantum measurement. If we add a key-updating rule for measurement, then our schemes would be usable for the quantum circuit with measurement. The key-updating rule for measurement has been introduced in Ref.\cite{Broadbent2015}.

Our QHE schemes allow the homomorphic evaluation of any unitary quantum circuit ($\mathcal{F}$-homomorphic), and have perfect security. However, they do not conflict with the no-go result given by Yu et al.\cite{Yu2014}, since they are not compact and do not satisfy the definition of QFHE. It should be emphasized that, it is valuable to study the quasi-compact QHE scheme which can implement any quantum circuit homomorphically. Using the $M$-quasi-compact QHE scheme \texttt{VGT}, the complexity of the decryption procedure is independent of the size of quantum circuit $\mathcal{C}$; The decryption procedure is efficient only if $\mathcal{C}$ contains polynomial number of $T/T^\dagger$-gates (e.g. $M=poly(n)$). Thus, our QHE schemes are suitable for homomorphic evaluation of any quantum circuit with low $T/T^\dagger$-gate complexity, such as any polynomial-size quantum circuit or any quantum circuit with polynomial number of $T/T^\dagger$-gates.

Based on these results, there are two possible directions in the future researches.

\begin{itemize}
  \item Transform interactive quantum protocol into non-interactive one. This article has proposed a gate-teleportation-based two-party computation scheme, which can implement $EG[U]$. Actually, this scheme can be used to remotely perform $U$ gate. Because the quantum measurement in the scheme can be deferred until the final stage, the interaction can be eliminated, and then an interactive protocol can be transformed into a non-interactive one. This kind of transformation may be extended to transform many other interactive quantum protocols.
  \item Straightforward application of our QHE schemes. The efficiency of our QHE schemes depends on the total number of $T/T^\dagger$-gates. In order to implement application efficiently, the evaluated quantum circuit is required to have polynomial number of $T/T^\dagger$-gates. So it is important to optimize the number of $T/T^\dagger$-gates in the circuit while developing quantum application. In the future, we can also analyze the $T/T^\dagger$-complexity of some specific quantum algorithms (Quantum Fourier transformation (QFT) or Shor algorithm \cite{Shor1994}, HHL algorithm \cite{Harrow2009}, etc), and study the homomorphic implementation of these fundamental quantum algorithms using our QHE schemes.
\end{itemize}

%\subsubsection*{Acknowledgements}

% BibTeX users please use one of
%\bibliographystyle{spbasic}      % basic style, author-year citations
%\bibliographystyle{spmpsci}      % mathematics and physical sciences
%\bibliographystyle{spphys}       % APS-like style for physics
%\bibliography{}   % name your BibTeX data base

% Non-BibTeX users please use

\appendix
\section{Some definitions about quantum homomorphic encryption}
In this section, we introduce some concepts about QHE, including symmetric-key QHE, homomorphism, compactness, QFHE, quasi-compactness and security. Some definitions can be referred to Ref.\cite{Broadbent2015}.

\begin{definition}[Symmetric-key quantum homomorphic encryption]
A symmetric-key QHE scheme \texttt{QHE} contains the following four algorithms
\begin{center}\texttt{QHE=(QHE.KeyGen,QHE.Enc,QHE.Eval,QHE.Dec)}.\end{center}
\begin{description}
  \item[1.Key Generation.] $(sk,\rho_evk)\leftarrow \texttt{QHE.KeyGen}(1^n)$, where $sk$ is the secret key, $\rho_{evk}$ is quantum evaluation key in $D(\mathcal{H}_{evk})$. The evaluation key is optional in symmetric QHE scheme.
  \item[2.Encryption.] $\texttt{QHE.Enc}_{sk}:D(\mathcal{H}_M)\rightarrow D(\mathcal{H}_C)$, where $D(\mathcal{H}_M)$ and $D(\mathcal{H}_C)$ are the set of density operators in plaintext space and ciphertext space, respectively.
  \item[3.Evaluation.] $\texttt{QHE.Eval}^{QC}:D(\mathcal{H}_{evk}\otimes \mathcal{H}_C)\rightarrow D(\mathcal{H}_{C'}\otimes \mathcal{H}_{aux})$, where $\mathcal{H}_{C'}$ is the result space of quantum computation on the space $\mathcal{H}_C$. For any quantum circuit $QC$ (called evaluated circuit), with induced channel $\Phi_{QC}:D(\mathcal{H}_M)\rightarrow D(\mathcal{H}_{M'})$, we define a channel $\texttt{Eval}^{QC}$ that maps $D(\mathcal{H}_C)$ to $D(\mathcal{H}_{C'})$ with an additional auxiliary quantum state in space $\mathcal{H}_{aux}$. The evaluation key in $D(\mathcal{H}_{evk})$ is used up in the process.
  \item[4.Decryption.] $\texttt{QHE.Dec}_{sk}:D(\mathcal{H}_{C'}\otimes\mathcal{H}_{aux})\rightarrow D(\mathcal{H}_{M'})$. For any possible secret key $sk$, $\texttt{Dec}_{sk}$ is a quantum channel that maps ciphertext state together with auxiliary state to a plaintext state in $D(\mathcal{H}_{M'})$.
\end{description}
\end{definition}

\begin{definition}[Compactness]
QHE scheme \texttt{QHE} is compact if the algorithm \texttt{QHE.Dec} is independent of the evaluated circuit $QC$.
\end{definition}

\begin{definition}[Homomorphism]
Let $\mathcal{L}=\{\mathcal{L}_{\kappa}\}_{\kappa\in\mathds{N}}$ be a class of quantum circuits. A quantum encryption scheme \texttt{QHE} is homomorphic for the class $\mathcal{L}$ if for any sequence of circuits $\{C_{\kappa}\in \mathcal{L}_{\kappa}\}_{\kappa\in \mathds{N}}$ and input $\rho\in D(\mathcal{H}_M)$, there exists a negligible function $negl$ such that:
$$\Delta\left(\texttt{QHE.Dec}_{sk}\left(\texttt{QHE.Eval}^{C_{\kappa}}\left(\rho_{evk},\texttt{QHE.Enc}_{sk}(\rho)\right)\right),\Phi_{C_{\kappa}}(\rho)\right)\leq negl(\kappa),$$
where $(sk,\rho_{evk})\leftarrow \texttt{QHE.KeyGen}(1^{\kappa})$ and $\Phi_{C_{\kappa}}$ is the channel induced by quantum circuit $C_{\kappa}$.
\end{definition}

\begin{definition}[Quantum fully homomorphic encryption]
A QHE scheme is a quantum fully homomorphic encryption scheme if
\begin{enumerate}
  \item it is compact and
  \item it is $\mathcal{F}$-homomorphic (or homomorphic for $\mathcal{F}$), where $\mathcal{F}$ is the set of all quantum circuits over the universal quantum gate set $\{X,Z,H,P,CNOT,T,T^\dagger\}$.
\end{enumerate}
\end{definition}

\begin{definition}[Quasi-compactness]
Let $\mathcal{L}=\{L_\kappa\}_{\kappa\in\mathds{N}}$ be the set of all quantum circuits over the universal quantum gate set $\{X,Z,H,P,CNOT,T,T^\dagger\}$. Let $f:\mathcal{L}\rightarrow \mathds{R}_{\geq 0}$ be some function on the circuits in $\mathcal{L}$. A QHE scheme is $f$-quasi-compact if there exists a polynomial $p$ such that for any sequence of circuits $\{C_{\kappa}\in \mathcal{L}_\kappa\}_{\kappa\in\mathds{N}}$ with induced channels $\Phi_{C_\kappa}:D(\mathcal{H}_M)\rightarrow D(\mathcal{H}_(M'))$, the circuit complexity of decrypting the output of $\texttt{QHE.Eval}^{C_\kappa}$ is at most $f(C_\kappa)p(\kappa)$.
\end{definition}

Actually, QHE is a class of quantum encryption with special property. So its security can be defined following the definition of quantum encryption. Concretely, there are three level of security, e.g. computational security, information theoretic security and perfect security. This article focuses only on the QHE with perfect security. So we present the definition of perfect security as follows.

\begin{definition}[perfect security]
QHE scheme \texttt{QHE} is perfectly secure if there exists a quantum state $\Omega^{\mathrm{A}'}$ such that for all states $\rho^{\mathrm{AE}}$ we have that:
$$\parallel \texttt{QHE.Enc}(\rho^{\mathrm{AE}})-\Omega^{\mathrm{A}'}\otimes\rho^\mathrm{E}\parallel=0,$$
where \texttt{QHE.Enc} is an encryption algorithm performed on the part $\mathrm{A}$ of quantum state $\rho^\mathrm{AE}$. Denote $\texttt{QHE.Enc}(\rho^\mathrm{AE})$ as a quantum ensemble over the probability distribution of the secret key and all the randomness in the quantum algorithm.
\end{definition}

\section{Elementary identities about quantum gates}
For the quantum gates in the set $\mathcal{S}$, there exists the following identities (up to a global phase):
\begin{eqnarray*}
X^aZ^b &=& (-1)^{ab}Z^bX^a,\forall a,b\in\{0,1\}, \\
HX^aZ^b &=& Z^aX^bH,\forall a,b\in\{0,1\}, \\
PX^aZ^b &=& X^aZ^{a\oplus b}P,\forall a,b\in\{0,1\}, \\
P^aTX^aZ^b &=& X^aZ^{a\oplus b}T,\forall a,b\in\{0,1\}, \\
P^aT^{\dagger}X^aZ^b &=& X^aZ^bT^\dagger,\forall a,b\in\{0,1\}, \\
CNOT(X^aZ^b\otimes X^cZ^d) &=& (X^aZ^{b\oplus d}\otimes X^{a\oplus c}Z^d)CNOT,\forall a,b,c,d\in\{0,1\}.
\end{eqnarray*}

In the description of quantum circuit, the standard measurement (or $Z$-basis measurement) is represented as the diagram in Fig.\ref{fig6}(a).
The $U$-rotated Bell measurement can be transformed into the standard measurement. Their relation is shown in Fig.\ref{fig6}(b).
\begin{figure}[hbtp]
% Use the relevant command to insert your figure file.
% For example, with the graphicx package use
\centering
  \includegraphics[scale=1]{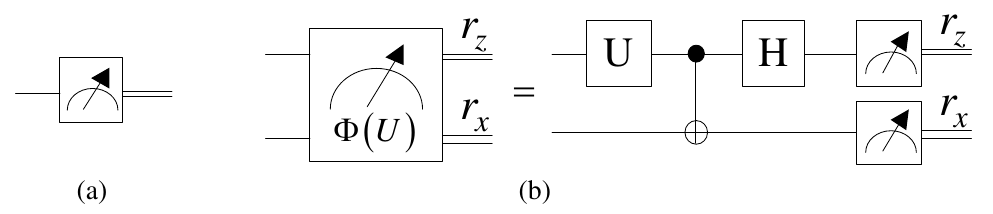}
% figure caption is below the figure
\caption{(a)Standard measurement (or $Z$-basis measurement) in quantum circuit. (b)The relation between $U$-rotated Bell measurement and standard measurement.}
\label{fig6}       % Give a unique label
\end{figure}

The $\mathrm{SWAP}$ can be implemented from three $\mathrm{CNOT}$ gates, e.g.
\begin{eqnarray*}
SWAP_{i,j} &=& CNOT_{i,j}CNOT_{j,i}CNOT_{i,j} \\
&=& CNOT_{i,j}(H_i\otimes H_j)CNOT_{i,j}(H_i\otimes H_j)CNOT_{i,j}.
\end{eqnarray*}
In quantum circuit, $SWAP_{i,j}$ is represented as the diagram in Fig.\ref{fig7}.
\begin{figure}[hbtp]
% Use the relevant command to insert your figure file.
% For example, with the graphicx package use
\centering
  \includegraphics[scale=1]{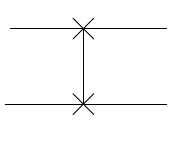}
% figure caption is below the figure
\caption{Quantum operation $\mathrm{SWAP}$ in quantum circuit.}
\label{fig7}       % Give a unique label
\end{figure}

\section{Key-updating rules}
Suppose the evaluated quantum circuit consists of the gates in the set $\mathcal{S}=\{X,Z,H,P,CNOT,T,T^\dagger\}$.
The quantum data has $n$ qubits and is encrypted according to QOTP scheme with the secret key being a $2n$-bit string. The secret key is the initial key of key-updating. Let the initial key be $(x_0,z_0)$, where $x_0,z_0\in\{0,1\}^n$. Denote $x_0=x_0(1)x_0(2)\cdots x_0(n)$ and $z_0=z_0(1)z_0(2)\cdots z_0(n)$. Given $n$-qubit data, the $w$th ($w=1,\ldots,n$) qubit is encrypted with the $w$th pair of bits $(x_0(w),z_0(w))$. Once that a quantum gate is performed on the $w$th encrypted qubit, the $w$th pair of bits should be updated so as to decrypt that qubit correctly. Denote by $(x_j,z_j)$ the key after the $j$th key-updating, where $x_j=x_j(1)x_j(2)\cdots x_j(n)$,$z_j=z_j(1)z_j(2)\cdots z_j(n)$.

\begin{description}
  \item[Rules 1:] If $Gate[j]$ does not act on the $w$th qubit, then let $(x_j(w),z_j(w)):=(x_{j-1}(w),z_{j-1}(w))$; Otherwise goto rules 2 and 3.
  \item[Rules 2:] If $Gate[j]=CNOT_{w,w'}$, then let
  \begin{eqnarray}
  (x_j(w),z_j(w))&&:=(x_{j-1}(w),z_{j-1}(w)\oplus z_{j-1}(w')),\\
  (x_j(w'),z_j(w'))&&:=(x_{j-1}(w)\oplus x_{j-1}(w'),z_{j-1}(w')).
  \end{eqnarray}
  \item[Rules 3:] If $Gate[j]$ acts only on the $w$th qubit, there exist the following cases.
  \begin{enumerate}
    \item If $Gate[j]=X_w$ or $Z_w$, then let $(x_j(w),z_j(w)):=(x_{j-1}(w),z_{j-1}(w))$.
    \item If $Gate[j]=H_w$, then let $(x_j(w),z_j(w)):=(z_{j-1}(w),x_{j-1}(w))$.
    \item If $Gate[j]=P_w$, then let $(x_j(w),z_j(w)):=(x_{j-1}(w),x_{j-1}(w)\oplus z_{j-1}(w))$.
    \item If $Gate[j]\in\{T_w,T_w^\dagger\}$, in our QHE schemes, $Gate[j]$ is executed with a subsequent operation $EG[P^{x_{j-1}(w)}]$. Denote by ($r_x(i)$, $r_z(i)$) the classical output of $EG[P^{x_{j-1}(w)}]$ (Assume $Gate[j]$ is the $i$th gate in the sequence of $T/T^\dagger$-gates). Let
        \begin{eqnarray*}
        (x_j(w),z_j(w)):=&& (x_{j-1}(w)\oplus r_x(i),x_{j-1}(w)\oplus z_{j-1}(w)\oplus r_z(i)), \\
        && \text{if } Gate[j]=T_w;\\
        (x_j(w),z_j(w)):=&& (x_{j-1}(w)\oplus r_x(i),z_{j-1}(w)\oplus r_z(i)), \text{if } Gate[j]=T_w^\dagger.
        \end{eqnarray*}
  \end{enumerate}
\end{description}

Based on these key-updating rules \textbf{Rules 1,2,3}, the keys $(x_j,z_j)$, $j=1,\ldots,N$ can be computed from the evaluated circuit and initial key $(x_0,z_0)$. It can be verified that, these keys satisfy the following relations.
\begin{enumerate}
  \item If $Gate[j]$ does not act on $w$th qubit ($\forall j,w$), then
  $$(x_{j-1}(w),z_{j-1}(w))=(x_j(w),z_j(w)).$$
  \item If $Gate[j]\in\{X,Z,H,P\}$ acts on $w$th qubit, then (up to a global phase)
  $$Gate[j]_w X^{x_{j-1}(w)} Z^{z_{j-1}(w)}= X^{x_j(w)} Z^{z_j(w)} Gate[j]_w.$$
  \item If $Gate[j]$ is $CNOT_{w,w'}$, then (up to a global phase)
  \begin{eqnarray*}
  && CNOT_{w,w'}(X^{x_{j-1}(w)} Z^{z_{j-1}(w)} \otimes X^{x_{j-1}(w')} Z^{z_{j-1}(w')}) \\
  && =(X^{x_j(w)} Z^{z_j(w)} \otimes X^{x_j(w')} Z^{z_j(w')})CNOT_{w,w'}.
  \end{eqnarray*}
  \item If $Gate[j]$ is $T_w/T_w^\dagger$ and is the $i$th gate in the sequence of $T/T^\dagger$-gates, then (up to a global phase)
  \begin{eqnarray*}
  && EG[P^{x_{j-1}(w)}]Gate[j]_w X^{x_{j-1}(w)} Z^{z_{j-1}(w)} \\
  && = ((r_x(i),r_z(i)),X^{x_j(w)} Z^{z_j(w)} Gate[j]_w),
  \end{eqnarray*}
  where $(r_x(i),r_z(i))$ is the classical output of $EG[P^{x_{j-1}(w)}]$.
\end{enumerate}

\section{An example for the scheme \texttt{GT}}
Given any single-qubit unitary operator $U$ and any $\epsilon>0$, it is possible to approximate $U$ to within $\epsilon$ using a circuit composed of $H$ gates and $T$-gates. In order to verify the principle of our QHE scheme \texttt{GT}, we only consider the single-qubit circuit $C_1=HTHT$, which is composed of $T$ and $HTH$ (see Fig.\ref{fig8}). Up to a global phase, the gates satisfy $T=R_z(\pi/4)$ and $HTH=R_x(\pi/4)$.

\begin{figure}[hbt]
% Use the relevant command to insert your figure file.
% For example, with the graphicx package use
\centering
  \includegraphics[scale=0.9]{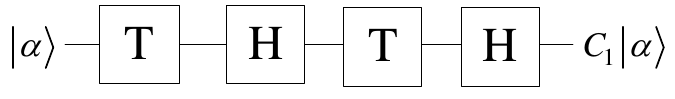}
% figure caption is below the figure
\caption{A single-qubit quantum circuit $C_1$¡£}
\label{fig8}       % Give a unique label
\end{figure}

The quantum circuit $C_1$ is given $n=1$ qubit as input, and the $N=4$ quantum gates are performed on the single-qubit. $C_1$ can be described as a sequence of the gates, e.g $Gate[1]=T,Gate[2]=H,Gate[3]=T,Gate[4]=H$. It contains $M=2$ $T$-gates and $j_1=1,w_1=1,j_2=3,w_2=1$. The QHE scheme \texttt{GT} for $C_1$ uses $M=2$ Bell states denoted by $|\Phi_{00}\rangle_{c_i,s_i},i=1,2$. The qubits labeled as $s_i$ and $c_i$ are held by Server and Client, respectively.
Fig.\ref{fig9} shows the QHE scheme \texttt{GT} for $C_1$. Because $n=1$, the secret key is $sk=(x_0,z_0)$, where $x_0=x_0(1),z_0=z_0(1)$.

\begin{figure}[hbt]
% Use the relevant command to insert your figure file.
% For example, with the graphicx package use
\centering
  \includegraphics[scale=1.25]{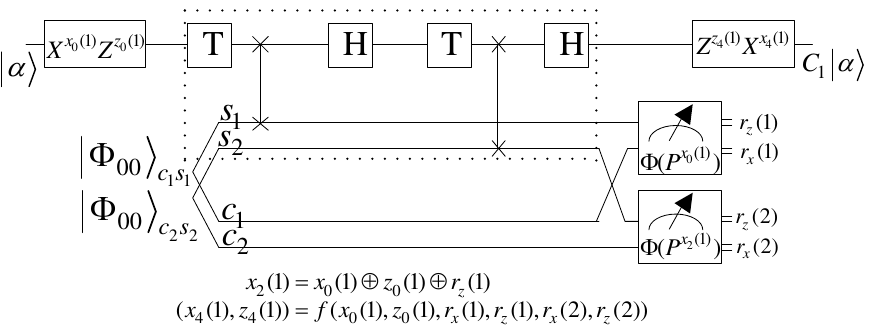}
% figure caption is below the figure
\caption{QHE scheme \texttt{GT} for single-qubit circuit $C_1$. Server's quantum operations are shown in the dashed box, and Client's operations are shown outside the dashed box. Client's secret key is $(x_0(1),z_0(1))$, and the final key $(x_4(1),z_4(1))$ is computed from the secret key and the measurement results.}
\label{fig9}       % Give a unique label
\end{figure}

In the evaluation procedure, Server should finish the quantum operations shown in the dashed box. In addition, Server must generate $M+1=3$ key-updating functions $g_1$, $g_2$ and $f$ based on key-updating rules and evaluated circuit $C_1$. Key-updating functions $x_0(1)=g_1(x_0,z_0)$, $x_2(1)=g_2(x_0,z_0,r_x(1),r_z(1))$ are expressed as follows
\begin{eqnarray*}
x_0(1) &=& x_0(1), \\
x_2(1) &=& x_0(1)\oplus z_0(1) \oplus r_z(1).
\end{eqnarray*}
Key-updating function $(x_4,z_4)=f(x_0,z_0,r_x(1),r_z(1),r_x(2),r_z(2))$ is expressed as follow
\begin{eqnarray*}
x_4(1) &=& z_0(1)\oplus r_x(1)\oplus r_z(1)\oplus r_z(2), \\
z_4(1) &=& x_0(1)\oplus z_0(1)\oplus r_z(1)\oplus r_x(2).
\end{eqnarray*}

In the decryption procedure, Client performs two quantum measurements: (1) according to the function $g_1$, Client computes the measurement basis $\Phi(P^{x_0(1)})$, and then performs quantum measurement and obtains a pair of bits $(r_x(1),r_z(1))$;
(2) according to the function $g_2$, Client computes the measurement basis $\Phi(P^{x_2(1)})$, and then performs quantum measurement and obtains a pair of bits $(r_x(2),r_z(2))$. Finally, according to the function $f$, Client computes the final key $(x_4(1),z_4(1))$ and performs QOTP decryption transformation.

\section{An example for the scheme \texttt{VGT}}
For two-qubit quantum computation, we choose two-qubit quantum Fourier transformation (QFT) as example. The two-qubit QFT can be implemented by the quantum circuit $C_2$ in Fig.\ref{fig10}, which contains $H,CNOT,T,T^\dagger$.

\begin{figure}[hbt]
% Use the relevant command to insert your figure file.
% For example, with the graphicx package use
\centering
  \includegraphics[scale=0.9]{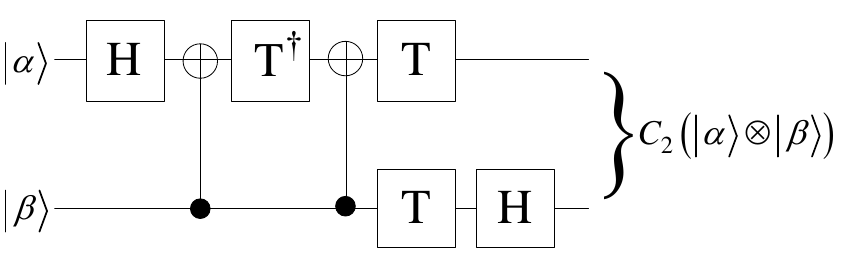}
% figure caption is below the figure
\caption{Quantum circuit $C_2$ for the two-qubit quantum Fourier transformation.}
\label{fig10}       % Give a unique label
\end{figure}

The quantum circuit $C_2$ is given $n=2$ qubits, and output $2$ qubits. It consists of $N=7$ quantum gates, and can be described as a sequence of gates, e.g. $Gate[1]=H_1$, $Gate[2]=CNOT_{2,1}$, $Gate[3]=T_1^\dagger$, $Gate[4]=CNOT_{2,1}$, $Gate[5]=T_1$, $Gate[6]=T_2$, $Gate[7]=H_2$. The circuit contains $2$ $T$-gates and $1$ $T^\dagger$-gate. It can be known from $C_2$ that, $M=3$ and $j_1=3$, $w_1=1$, $j_2=5$, $w_2=1$, $j_3=6$, $w_3=2$. The QHE scheme \texttt{VGT} for $C_2$ should use $M=3$ Bell states denoted by $|\Phi_{00}\rangle_{c_i,s_i}$,$i=1,2,3$. The qubits labeled as $s_i$ and $c_i$ are held by Server and Client, respectively.
Fig.\ref{fig11} shows the QHE scheme \texttt{VGT} for $C_2$.
Because $n=2$, the secret key is $sk=(x_0,z_0)$, where $x_0=x_0(1)x_0(2)$, $z_0=z_0(1)z_0(2)$.

\begin{figure}[hbt]
% Use the relevant command to insert your figure file.
% For example, with the graphicx package use
\centering
  \includegraphics[scale=1.25]{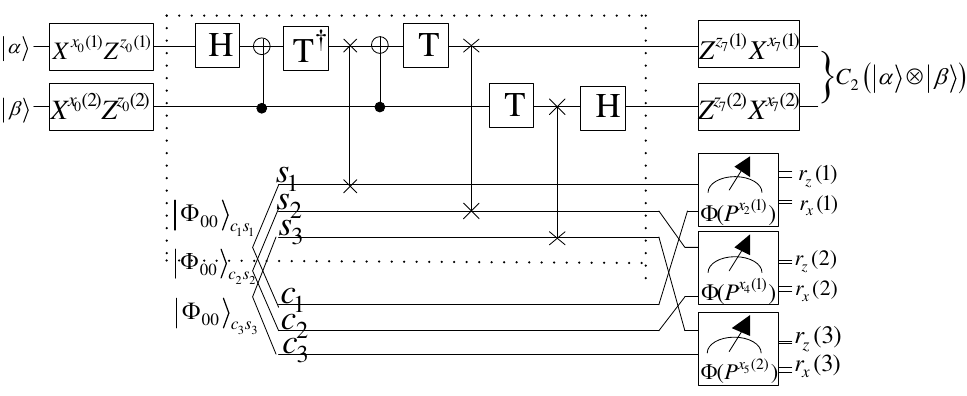}
% figure caption is below the figure
\caption{QHE scheme \texttt{VGT} for two-qubit circuit $C_2$. Server's quantum operations are shown in the dashed box, and Client's operations are shown outside the dashed box. Client's secret key is $(x_0,z_0)$, and the final key $(x_7,z_7)$ is computed from the secret key and the measurement results.}
\label{fig11}       % Give a unique label
\end{figure}

In the evaluation procedure, Server should finish the quantum operations shown in the dashed box. In addition, Server must generate $2M+1=7$ key-updating functions $\{g_i\}_{i=1}^3$ and $\{f_i\}_{i=1}^4$ based on key-updating rules and evaluated circuit $C_2$. Key-updating functions $x_2(1)=g_1(x_0,z_0)$, $x_4(1)=g_2(x_3,z_3)$, $x_5(2)=g_3(x_5,z_5)$ are expressed as follows
\begin{eqnarray*}
  g_1:&& x_2(1) = x_0(2) \oplus z_0(1), \\
  g_2:&& x_4(1) = x_3(1) \oplus x_3(2), \\
  g_3:&& x_5(2) = x_5(2).
\end{eqnarray*}
The three key-updating functions $(x_3,z_3)=f_1(x_0,z_0,r_x(1),r_z(1))$, $(x_5,z_5)=f_2(x_3,z_3,r_x(2),r_z(2))$, $(x_6,z_6)=f_3(x_5,z_5,r_x(3),r_z(3))$, $(x_7,z_7)=f_4(x_6,z_6)$ are expressed as follows
\begin{eqnarray*}
  f_1: && \left(\begin{array}{cc} x_3(1) & z_3(1) \\  x_3(2) & z_3(2) \end{array} \right)
        = \left(\begin{array}{cc} x_0(2)\oplus z_0(1)\oplus r_x (1) & x_0(1)\oplus r_z(1) \\ x_0(2) & x_0(1)\oplus z_0(2)  \end{array} \right) \\
  f_2: && \left(
            \begin{array}{cc}
              x_5(1) & z_5(1) \\
              x_5(2) & z_5(2) \\
            \end{array}
          \right)
        = \left(
            \begin{array}{cc}
              x_3(1)\oplus x_3(2)\oplus r_x(2) & x_3(1)\oplus x_3(2)\oplus z_3(1)\oplus r_z(2) \\
              x_3(2) & z_3(1)\oplus z_3(2) \\
            \end{array}
          \right)                       \\
  f_3: && \left(
            \begin{array}{cc}
              x_6(1) & z_6(1) \\
              x_6(2) & z_6(2) \\
            \end{array}
          \right)
   = \left(
       \begin{array}{cc}
         x_5(1) & z_5(1) \\
         x_5(2)\oplus r_x (3) & x_5(2)\oplus z_5(2)\oplus r_z(3) \\
       \end{array}
     \right)                            \\
  f_4: && \left(
            \begin{array}{cc}
              x_7(1) & z_7(1) \\
              x_7(2) & z_7(2) \\
            \end{array}
          \right)
   = \left(
       \begin{array}{cc}
         x_6(1) & z_6(1) \\
         z_6(2) & x_6(2) \\
       \end{array}
     \right)
\end{eqnarray*}

In the decryption procedure, Client performs three rounds of the computations ``$g_i$-measurement-$f_i$" and obtains the key $(x_6,z_6)$. In the $i$th round $(i=1,\ldots,3)$, Client alternately performs the following steps: (1)according to the function $g_i$, Client computes the measurement basis; (2)Client measures the pair of qubits $(s_i,c_i)$ and obtains two bits $(r_x(i),r_z(i))$; (3)according to the function $f_i$, Client computes the intermediate key $(x_{j_i},z_{j_i})$. Then, based on the function $f_4$, Client computes the final key $(x_7,z_7)$ from $(x_6,z_6)$. Finally, Client performs QOTP decryption with the key $(x_7,z_7)$.

\end{document}